\documentclass [twoside,12pt] {article}
\usepackage{setspace,graphicx,bm,fancyhdr}
\usepackage{amsmath,amsthm,amssymb,latexsym}
\usepackage{color}
\usepackage{graphicx}
\usepackage{soul}

\usepackage{authblk}

\newcommand{\vct}[1]{\mbox{$\stackrel{\rightarrow}{#1}$}}

\newcommand{\slas}[1]{\mbox{${{#1} \!\!\! /}$}}

\begin{document}
\title{\bf Finite quantum gravity in locally conformally flat spacetime through covariant spectral regularization} 
\author{John Mashford \\
School of Mathematics and Statistics \\
University of Melbourne, Victoria 3010, Australia \\
E-mail: mashford@unimelb.edu.au}
\date{\today}
\maketitle

\begin{abstract}

Pure gravity quantum gravity (pure gravity QG) is non-renormalizable at two and higher loop level. One of the main characteristics of objects occurring in QG is their Lorentz covariance. QG necessitates the analysis of a collection of Lorentz covariant terms of higher and higher order, which is not feasible using renormailization. This paper describes how the regularization technique known as covariant spectral regularization (CSR) can be used to analyze Lorentz covariant 4-tensor valued measures on Minkowski space. The graviton vacuum polarization tensor is cast as such a measure and CSR is used to obtain a well defined and finite function representing this tensor. This is then used to compute the Feynman amplitudes for examples of one-loop, two-loop and $n$-loop graviton-graviton scattering in pure gravity QG. The CSR method results in functions (densities) representing divergent Feynman integrals without requiring renormalization or counterterms.  Therefore it can be used to analyze arbitrarily complicated multiloop QG Feynman diagrams without being crippled by a proliferation of counterterms, as is classical QG.

\end{abstract}

\tableofcontents

\section{Introduction}

't Hooft and Veltman \cite{tHooft} proved that divergences in pure gravity quantum gravity (pure gravity QG) at one loop level can be dealt with using a finite number of counterterms in renormalization, i.e. the theory is renormalizable. 

On the other hand they proved that, in the presence of scalar particles, QG is non-renormalizable at one loop level, and that pure gravity QG without matter particles is non-renormalizable at two and higher loop level when using the classical renormalization technique. 

Deser and van Nieuwenhuizen \cite{Deser1a} proved the non-renormalizability of the Dirac-Einstein system and Deser {\em et al.} \cite{Deser1b} proved the non-renormalizability of Einstein-Yang-Mills interactions at one-loop level, once again, in the context of the classical  approach of renormalization for circumventing divergences.

One attempt to address this problem is to modify the Einstein-Hilbert Lagrangian to a higher derivative Lagrangian. In this approach higher derivative terms involving the curvature are added to the Einstein-Hilbert action. This eliminates the problem of non-renormalizability. However problems with unitarity, causality and in dealing with unphysical ghost states arise \cite{Buonifante}.

Another candidate for a solution to the problem of QG non-renormalizability was supergravity but it has been known for more than 25 years that $D=11$ supergravity is non-renormalizable at 2-loop level \cite{Deser}.

String theory and Loop quantum gravity are both interesting and promising approaches to QG on which a large amount of research has been undertaken.

Some authors believe that GR will act as a universal cutoff for all divergences (including those of GR) while other authors hold the opposite view, that GR is irretrievably un-renormalizable owing to the dimensionality of the gravitational coupling constant \cite{Deser1}.

The approach known as causal perturbation theory is a candidate for a method of carrying out calculations in QG without encountering divergence. Grillo \cite{Grillo} carries out one-loop calculations in QG using causal perturbation theory. While not having divergences in some cases this method has limited applicability and some other problems which we will mention below. 

We have developed a method which we originally called spectral regularization for carrying out computations in QFT without encountering divergence or needing to carry out renormalization. We now call this method covariant spectral regularization (CSR) in order to distinguish it from other techniques called ``spectral regularization".

In the next section we will discuss approaches previously developed for carrying out non-tree-level computations in QFT and in Section~\ref{section:CSR} we will discuss covariant spectral regularization. Then, after an excursion in Section \ref{section:conformally_flat_spacetime} where we discuss the geometry of spacetime, the rest of the paper will show how CSR can be used to compute the graviton vacuum polarization tensor without renormalization or needing counterterms, and how this can be used to compute the Feynman amplitude for multiloop graviton-graviton scattering.  

\section{Methods previously developed for non-tree-level QFT computations}

Tree level computations in QFT, that is computations based on Feynman diagrams with no loops, give quite good results. An example is the M{\o}ller scattering formula for electron-electron scattering. However more accurate predictions can be made by taking into account higher order Feynman diagrams which have loops, leading to what are known as radiative corrections. 

It was found, in the early days of QFT, that, when following the Feynman rules, the integrals involved in radiative corrections are divergent. A number of methods for circumventing the problem caused by these infinities have been developed. We will discuss briefly these methods in the remainder of this section.

The method of dimensional regularization was developed in 1972. In this method divergent parts of the offending integrals are isolated by introducing a fictitious parameter, a deviation $\epsilon$ of the spacetime dimension from 4. The current standard QFT regularization method is $\overline{\text{MS}}$ dimensional regularization. In this method a term involving the parameter $\epsilon$ is subtracted from a divergent integral. This term becomes infinite as $\epsilon\rightarrow0$. Such infinities are cancelled by adding ``counterterms" to the Lagrangian. The counterterms are infinite. 

This process of manipulating and cancelling infinite quantities is called renormalization. The $\overline{\text{MS}}$ regularization/renormalization method has been used extensively in QFT computations in conjunction with the Lagrangian formulation of the SM resulting in many highly accurate predictions which have been thoroughly experimentally verified over the past five decades. 

The prncipal other forms of regularization in QFT are Pauli-Villars, hard cutoff, heat kernel, Gaussian and $\zeta$-function \cite{Schwartz}. They all require renormalization.

The method of causal perturbation theory (causal pQFT), and, in particular, causal perturbative QED (causal pQED) was developed in which the process of renormalization is not carried out. However a less problematic process of ``normalization" needs to be carried out. 
Causal pQED had its origin in work of Bogoliubov and others but its role in formulating a divergence-free representation of QFT became apparent through the work of Epstein and Glaser \cite{Epstein} in 1973. It was further developed by a number of researchers and described in the book by Scharf (Scharf, 1995 \cite{Scharf}).

Causal pQFT has a number of problems or difficulties which can be summarized as follows \cite{Scharf}.

\begin{itemize}
\item In full QED the fundamental causality axiom concerning distribution splitting for causal pQED does not have a direct physical foundation.
\item Incorrect distribution splitting which may occur if the distributions don't have the right singular order, leads to ultraviolet divergences. 
\item For every graph with singular order $\omega\geq0$, a polynomial of degree $\omega$ remains undetermined in momentum space. Due to this situation a process called {\em normalization} needs to be carried out. 
\item As in renormalization, normalization requires free parameters. There are cases where the number of required free parameters increases with the order of the perturbation expansion without bound in which case the theory is non-normalizable. 
\item Infrared divergences appear, at a much later stage, namely, when one calculates observable quantities in the adiabatic limit.
\end{itemize}

\section{Covariant spectral regularization\label{section:CSR}}

Relatively recently the method of covariant spectral regularization (CSR) has been developed and applied to a number of practical problems in pQFT and, in particular, pQED. It operates by interpreting the objects defined by Feynman integrals as Lorentz covariant Borel complex tensor valued measures, computing their spectra using a spectral calculus and hence computing densities associated with the objects which can be used in computations involving the Feynman integrals. No renormalization or normalization is required and the computations are UV and IR divergence-free. 

CSR is introduced in Ref. \cite{Springer}. The first step in interpreting a Feynman integral using CSR is to assume ``pretend" that the function purportedly defined by the integral expression exists and to treat it as a density for a measure on Minkowski space. 

Then one carries out a formal argument where the only non-rigorous step is the interchange of the order of integration at one point. Note that if one is to proceed by a series of steps from something that is not well defined (e.g. a Feynman integral) to something that is well defined (a Borel complex tensor valued measure) then not all steps can be rigorous, there must be one or more ``formal" steps. In this case the formal step is the interchange of an order of integration. 

A rigorous step in this sequence of steps is to interpret propagator like functions as measures on Minkowski space. Specifically one makes the identification
\[ \frac{1}{p^2-m^2+i\epsilon}\,dp\leftrightarrow-i\pi\Omega_m^{\pm}(dp), \]
where $m\geq0$ is the particle mass and $\Omega_m^{\pm}$ is the standard (or canonical) measure on Minkowski space supported on the mass $m$ hyperboloid (or cone) $H_m^{\pm}=\pm\{p\in{\bf R}^4:p^2=m^2,p^0\geq0\}$, given by
\[ \Omega_m^{\pm}(\Gamma)=\int_{{\vct p}\in{\bf R}^3}\chi_{\Gamma}((\pm\omega_m({\vct p}),{\vct p}))\omega_m({\vct p})^{-1}\,d{\vct p}, \]
in which $\chi_{\Gamma}$ is the characteristic function (indicator function) for the set $\Gamma$ given by
\[ \chi_{\Gamma}(p)=\left\{\begin{array}{l}
1\text{ if }p\in\Gamma\\
0\text{ otherwise,}
\end{array}\right. \]
and
\[ \omega_m({\vct p})=(m^2+{\vct p}^2)^{\frac{1}{2}}. \]
The justification and derivation for this identification is given in Ref. \cite{Symmetry}. 

After this formal argument one arrives at a valid well defined Borel complex tensor valued measure on Minkowski space. 

It is then necessary to prove that the measure so obtained is Lorentz covariant.

Then one uses a rigorously well defined computation using a spectral calculus \cite{Symmetry} to compute the spectrum of the measure from which one can compute the density for the measure, and hence a well defined analytic function representing the Feynman integral. 

The technique has been used in practical applications such as the computation of the Uehling contribution to the Lamb shift for the hydrogen atom \cite{IJMPA,NPB}, the computation of the anomalous magnetic moment of the electron \cite{MSQUARE}, the computation of the next-to-leading-order (NLO) correction to the high energy limit of the cross section for muon production from electron-positron annihilation  \cite{MSQUARE_2025} and the computation of running couplings for QED and QCD which do not manifest Landau poles \cite{NPB,IJMPA_2025}. All the classical tests of radiative corrections in QED are satisfied when using CSR \cite{IJMPA_2025}. 

The results of using CSR agree precisely (on the examples for which it has been tested) with those of using $\overline{\text{MS}}$ except that CSR does not have the, what are usually described as ``artefacts", such as UV divergences, requisite counterterms, Landau poles, IR divergences which need to be cancelled by soft photons etc., that are present when using $\overline{\text{MS}}$.

In this paper the method will be further developed in the context of Lorentz covariant 4-tensor valued measures and  illustrated in our QG computations below.

\section{Locally conformally flat spacetimes\label{section:conformally_flat_spacetime}}

If $X$ is a ($C^{\infty}$) Lorentzian manifold then $X$ is locally conformally flat if and only if for each point $x\in X$ there exists a chart $\phi:U\rightarrow V$ for $X$ where $U\subset X$ is an open neighborhood of $x$ and $V$ is an open subset of ${\bf R}^4$, such that the metric tensor $g_{\mu\nu}$ for $X$ has, in the chart $\phi$, the form
\[ g_{\mu\nu}(\xi)=\lambda(\xi)\eta_{\mu\nu},\forall\xi\in V, \]
for some $\lambda\in C^{\infty}(V,(0,\infty))$,where $\eta=\text{diag}(1,-1,-1,-1)$.

There is a natural one-to-one correspondence between the set of equivalence classes of conformally equivalent locally conformally flat spacetimes and the set of what are known as M\"{o}bius structures \cite{AMP}. M\"{o}bius structures are defined by atlases of conformally compatible charts and do not have explicit metrics.

A canonical structure associated with any M\"{o}bius structure is a principal bundle with structure group the orthochronous Lorentz group. Thus, without explicitly imposing a metric, the tangent space has the structure of Minkowski space and its Lorentz symmetry.

A number of spacetimes of interest in general relativity (GR) are conformally flat. Firstly any globally conformally flat GR spacetime is locally conformally flat. Four notable examples of globally conformally flat GR spacetimes are Minkowski spacetime, the de Sitter spacetime which describes a universe with a positive cosmological constant, the anti-de Sitter (ADS) space which is studied extensively in string theory and the Friedman-Lema\^{i}tre-Robertson-Walker (FLRW) spacetime. 

The FLRW metric describes an expanding or contracting universe and is commonly used to model the universe on the large scale. 

Another example is the interior Schwarzschild metric which describes a constant density sphere and is commonly used to model stars and planets. 

However not all GR spacetimes are locally conformally flat. The vacuum Schwarzs-child solution is a notable example. 

In QFT there is no such thing as completely empty space. The QFT vacuum contains a sea of virtual particles. Therefore it might be that the vacuum Schwarzschild solution is a limiting case, an effective theory construct, associated with a non-empty quantum vacuum state which might be modeled by a locally conformally flat spacetime region such as that defined by, for example, the interior Schwarzschild metric.

The importance of restricting attention to locally conformally flat spacetimes is, apart from the naturalness of modeling spacetime as as a M\"{o}bius structure, i.e. as a locally Minkowskian causal structure \cite{AMP}, that M\"{o}bius structures have (up to a canonical removable dilatation) Minkowski space as tangent space. i.e. the tangent space possesses the metric structure of Minkowski space without having to impose it. 

By contrast, in GR spacetime is modeled as a smooth 4-manifold which has, as tangent space, ${\bf R}^4$ with its vector space structure but no metric structure. Then a metric structure is imposed in the form of a metric tensor field resulting in a Lorentzian manifold.

A common approach to attempting to quantize gravity is to, in some sense, quantize the metric tensor $g_{\mu\nu}$, that is to treat it (or its deviation $h_{\mu\nu}=g_{\mu\nu}-\eta_{\mu\nu}$ from the flat spacetime metric tensor) as a quantum field on a par with the electromagnetic field in QED e.g. see Ref.~\cite{Mandelstam}. The problem with this is that, in GR, the metric tensor defines the geometry of spacetime, i.e. the {\em arena} in which the fields operate so it is somewhat contradictory, or circular, to treat the metric as a dynamic field. 

In our work there is no distinguished metric tensor. The paramount structures are principal bundles, one with structure group $U(2,2)$ and related bundles and another with structure group $(0,\infty)\times O(1,3)^{\uparrow}$ (where $O(1,3)^{\uparrow}$ denotes the orthochronous Lorentz group) \cite{AMP,IJMPA}. There is no canonical metric but there are canonical symmetry group structures.

We consider, in common with GR, that spacetime has a continuum nature at all scales. Then the notion of tangent space to the spacetime manifold is well defined and non-problematic. It is, within the tangent space at any point $x\in X$ in spacetime, that QFT momentum variables range, providing a flat arena for the Feynman amplitude functions and the consequent structure of the SM.

By considering locally conformally flat spacetimes (M\"{o}bius structures) Lorentz covariance (on the tangent space to spacetime $X$ at any point $x\in X$) becomes of significance in a natural way without having to add further structure to impose it.

\section{Lorentz covariant 4-tensor valued measures on Minkowski space}

\subsection{Lorentz covariant measures and their densities}

Let ${\mathcal B}({\bf R}^4)$ denote the Borel $\sigma$-algebra for Minkowski space (${\bf R}^4$) with respect to the Euclidean topology for ${\bf R}^4$ and ${\mathcal B}_0({\bf R}^4)=\{\Gamma\in{\mathcal B}({\bf R}^4):\text{ $\Gamma$ is relatively compact}\}$. By a complex $n$-tensor valued measure we will mean, in this paper, an array of maps $\Pi^{\mu_1\ldots\mu_n}:{\mathcal B}_0({\bf R}^4)\rightarrow{\bf C}$ where $\mu_1,\ldots,\mu_n\in\{0,1,2,3\}$ such that $\Pi(\emptyset)=0$ and $\Pi|_K$ is $\sigma$-additive for all compact $K\subset{\bf R}^4$ where
\[ \Pi|_K(\Gamma)=\Pi(\Gamma\cap K),\forall\Gamma\in{\mathcal B}_0({\bf R}^4). \]
A contravariant complex $n$-tensor valued measure $\Pi^{\mu_1\ldots\mu_n}:{\mathcal B}_0({\bf R}^4)\rightarrow{\bf C}$ will be said to be Lorentz covariant if
\[ \Pi^{\mu_1\ldots\mu_n}(\Lambda\Gamma)=\Lambda^{\mu_1}{}_{\rho_1}\ldots\Lambda^{\mu_n}{}_{\rho_n}\Pi^{\rho_1\ldots\rho_n}(\Gamma),\forall\Lambda\in O(1,3)^{\uparrow},\Gamma\in{\mathcal B}_0({\bf R}^4). \]
(Here $\Lambda\Gamma$ denotes the set $\Lambda\Gamma=\{\Lambda q:q\in\Gamma\}$.)
In particular, a 4-tensor valued measure $\Pi^{\mu\nu\rho\sigma}:{\mathcal B}_0({\bf R}^4)\rightarrow{\bf C}$ is Lorentz covariant if
\[ \Pi^{\mu\nu\rho\sigma}(\Lambda\Gamma)=\Lambda^{\mu}{}_{\alpha}\Lambda^{\nu}{}_{\beta}\Lambda^{\rho}{}_{\gamma}\Lambda^{\sigma}{}_{\delta}\Pi^{\alpha\beta\gamma\delta}(\Gamma),\forall\Lambda\in O(1,3)^{\uparrow},\Gamma\in{\mathcal B}_0({\bf R}^4). \]
A contravariant $n$-tensor valued function $\Pi^{\mu_1\ldots\mu_n}$ defined on a domain $D\in{\mathcal B}({\bf  R}^4)$ (such that $O(1,3)^{\uparrow}D=D$) will be said to be Lorentz covariant if
\[ \Pi^{\mu_1\ldots\mu_n}(\Lambda q)=\Lambda^{\mu_1}{}_{\rho_1}\ldots\Lambda^{\mu_n}{}_{\rho_n}\Pi^{\rho_1\ldots\rho_n}(q),\forall\Lambda\in O(1,3)^{\uparrow},q\in D. \]

\newtheorem{lemma}{Lemma}
\begin{lemma}
Let $\Pi^{\mu\nu\rho\sigma}$ be a Lorentz covariant 4-tensor valued measure which is associated with (i.e. generated by) a density which we also denote as $\Pi^{\mu\nu\rho\sigma}$. Then, for all $\Lambda\in O(1,3)^{\uparrow}$ the density $\Pi^{\mu\nu\rho\sigma}$ satisfies
\begin{equation}\label{eq:density_equation}
\Pi^{\mu\nu\rho\sigma}(\Lambda q)=\Lambda^{\mu}{}_{\alpha}\Lambda^{\nu}{}_{\beta}\Lambda^{\rho}{}_{\gamma}\Lambda^{\sigma}{}_{\delta}\Pi^{\alpha\beta\gamma\delta}(q),
\end{equation}
for almost all $q\in{\bf R}^4$.
\end{lemma}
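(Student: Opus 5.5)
The plan is to compare two densities for the same measure. We take "generated by a density" to mean that the density $\Pi^{\mu\nu\rho\sigma}$ is locally integrable with respect to Lebesgue measure $dq$ on ${\bf R}^4$, with
\[ \Pi^{\mu\nu\rho\sigma}(\Gamma)=\int_{\Gamma}\Pi^{\mu\nu\rho\sigma}(q)\,dq,\quad\forall\Gamma\in{\mathcal B}_0({\bf R}^4). \]
Fix $\Lambda\in O(1,3)^{\uparrow}$. We will show that the two locally integrable functions $q\mapsto\Pi^{\mu\nu\rho\sigma}(\Lambda q)$ and $q\mapsto\Lambda^{\mu}{}_{\alpha}\Lambda^{\nu}{}_{\beta}\Lambda^{\rho}{}_{\gamma}\Lambda^{\sigma}{}_{\delta}\Pi^{\alpha\beta\gamma\delta}(q)$ have the same integral over every relatively compact Borel set. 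Uniqueness of densities then gives equality almost everywhere.

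\textbf{Step 1 (change of variables).} For $\Gamma\in{\mathcal B}_0({\bf R}^4)$ the set $\Lambda\Gamma$ is again relatively compact and Borel, since $\Lambda$ is a linear homeomorphism. Because $|\det\Lambda|=1$, Lebesgue measure is $\Lambda$-invariant, and substituting $q=\Lambda q'$ gives
\[ \Pi^{\mu\nu\rho\sigma}(\Lambda\Gamma)=\int_{\Lambda\Gamma}\Pi^{\mu\nu\rho\sigma}(q)\,dq=\int_{\Gamma}\Pi^{\mu\nu\rho\sigma}(\Lambda q')\,dq'. \]
The same substitution shows that $q\mapsto\Pi^{\mu\nu\rho\sigma}(\Lambda q)$ is locally integrable.

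\textbf{Step 2 (covariance of the measure).} By the hypothesis of Lorentz covariance and linearity of the integral,
\[ \Pi^{\mu\nu\rho\sigma}(\Lambda\Gamma)=\Lambda^{\mu}{}_{\alpha}\Lambda^{\nu}{}_{\beta}\Lambda^{\rho}{}_{\gamma}\Lambda^{\sigma}{}_{\delta}\int_{\Gamma}\Pi^{\alpha\beta\gamma\delta}(q)\,dq. \]
Subtracting this from Step 1, the locally integrable function
\[ f^{\mu\nu\rho\sigma}(q)=\Pi^{\mu\nu\rho\sigma}(\Lambda q)-\Lambda^{\mu}{}_{\alpha}\Lambda^{\nu}{}_{\beta}\Lambda^{\rho}{}_{\gamma}\Lambda^{\sigma}{}_{\delta}\Pi^{\alpha\beta\gamma\delta}(q) \]
satisfies $\int_\Gamma f^{\mu\nu\rho\sigma}\,dq=0$ for all $\Gamma\in{\mathcal B}_0({\bf R}^4)$.

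\textbf{Step 3 (uniqueness of densities).} Fix a component and split $f$ into real and imaginary parts. Take $\Gamma=\{q\in K:\mathrm{Re}\,f(q)>0\}$ with $K$ a closed ball. Then $\int_\Gamma\mathrm{Re}\,f\,dq=0$ forces $\mathrm{Re}\,f\le0$ a.e.\ on $K$. The analogous choice with $\mathrm{Re}\,f<0$, and the same argument for $\mathrm{Im}\,f$, gives $f=0$ a.e.\ on $K$. Exhausting ${\bf R}^4$ by countably many balls yields $f=0$ a.e. This is equation (\ref{eq:density_equation}) for the fixed $\Lambda$, outside a null set $N_\Lambda$.

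The only genuinely delicate point is the quantifier order. The lemma as stated allows the exceptional null set to depend on $\Lambda$, and the argument above delivers exactly that. Obtaining a single null set valid for all $\Lambda$ simultaneously is not claimed, and it would fail in general since $O(1,3)^{\uparrow}$ is uncountable. I would therefore state the result explicitly as: for each $\Lambda$, the identity holds for almost all $q$. The remaining care goes into checking that $\Lambda\Gamma\in{\mathcal B}_0({\bf R}^4)$ and that the Jacobian equals $1$; both are routine.
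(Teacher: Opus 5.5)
Your proof is correct and follows the paper's argument essentially step for step: a change of variables using the Lorentz invariance of Lebesgue measure, then substitution of the covariance hypothesis, then uniqueness of densities to get equality almost everywhere for each fixed $\Lambda$. Your explicit uniqueness argument in Step 3 and your remark that the exceptional null set may depend on $\Lambda$ make precise what the paper leaves implicit.
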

{\noindent{\bf Proof}}
We have
\[ \Pi^{\mu\nu\rho\sigma}(\Gamma)=\int_{\Gamma}\Pi^{\mu\nu\rho\sigma}(q)\,dq,\forall\Gamma\in{\mathcal B}_0({\bf R}^4). \]
Thus, by the Lorentz invariance of Lebesgue measure on ${\bf R}^4$, we have that for any $\Lambda\in O(1,3)^{\uparrow}$
\[ \Pi^{\mu\nu\rho\sigma}(\Lambda\Gamma)=\int_{\Lambda\Gamma}\Pi^{\mu\nu\rho\sigma}(q)\,dq=\int_{\Gamma}\Pi^{\mu\nu\rho\sigma}(\Lambda q)\,dq. \]
Also, by hypothesis,
\begin{align*}
\Pi^{\mu\nu\rho\sigma}(\Lambda\Gamma)=&\Lambda^{\mu}{}_{\alpha}\Lambda^{\nu}{}_{\beta}\Lambda^{\rho}{}_{\gamma}\Lambda^{\sigma}{}_{\delta}\Pi^{\alpha\beta\gamma\delta}(\Gamma)\\
=&\Lambda^{\mu}{}_{\alpha}\Lambda^{\nu}{}_{\beta}\Lambda^{\rho}{}_{\gamma}\Lambda^{\sigma}{}_{\delta} \int_{\Gamma}\Pi^{\alpha\beta\gamma\delta}(q)\,dq,
\end{align*}
for all $\Gamma\in{\mathcal B}_0({\bf R}^4)$. Thus for each $\Lambda\in O(1,3)^{\uparrow}$
\[ \int_{\Gamma}(\Pi^{\mu\nu\rho\sigma}(\Lambda q)-\Lambda^{\mu}{}_{\alpha}\Lambda^{\nu}{}_{\beta}\Lambda^{\rho}{}_{\gamma}\Lambda^{\sigma}{}_{\delta}\Pi^{\alpha\beta\gamma\delta}(q))\,dq=0,\forall\Gamma\in{\mathcal B}_0({\bf R}^4). \] 
Therefore for each $\Lambda\in O(1,3)^{\uparrow}$
\[ \Pi^{\mu\nu\rho\sigma}(\Lambda q)=\Lambda^{\mu}{}_{\alpha}\Lambda^{\nu}{}_{\beta}\Lambda^{\rho}{}_{\gamma}\Lambda^{\sigma}{}_{\delta}\Pi^{\alpha\beta\gamma\delta}(q), \]
for almost all $q\in{\bf R}^4$. $\Box$

If the density $\Pi$ is continuous then Eq.~\ref{eq:density_equation} holds for all $q\in{\bf R}^4$.

Conversely it is straightforward to show that if $\Pi^{\mu\nu\rho\sigma}:{\bf R}^4\rightarrow{\bf C}$ is a (locally integrable) Lorentz covariant complex tensor valued function then the measure $\Pi$ defined by
\[ \Pi^{\mu\nu\rho\sigma}(\Gamma)=\int_{\Gamma}\Pi^{\mu\nu\rho\sigma}(q)\,dq, \]
is a Lorentz covariant Borel complex tensor valued measure on Minkowski space. 

(No confusion need arise through using the same symbol for a function and the associated measure, the intended meaning in any case can be determined by the argument of the symbol.)

\subsection{Causal measures and their spectra}

Suppose that $\Pi$ is a Lorentz covariant complex 4-tensor valued function which is causal, i.e. is supported on $\{q\in{\bf R}^4:q^2\geq0,q^0\geq0\}$. Then $\Pi$ is determined (up to its values on the cone $C=\{p\in{\bf R}^4:p^2=0,p^0\geq0\}$ which is a set of Lebesgue measure zero) by its values on $\{(m,{\vct0}):m>0\}$.
Define $\lambda^{\mu\nu\rho\sigma}:(0,\infty)\rightarrow{\bf C}$ by
\[ \lambda^{\mu\nu\rho\sigma}(m)=\Pi^{\mu\nu\rho\sigma}((m,{\vct0}))\text{ for }m>0. \]
Thus the most general Lorentz covariant 4-tensor valued measure is determined by a function which takes, for each $m>0$, values $\lambda^{\mu\nu\rho\sigma}(m)$ which lie in the 256 dimensional space ${\bf C}^4\times{\bf C}^4\times{\bf C}^4\times{\bf C}^4\cong{\bf C}^{256}$. We call the $\lambda^{\mu\nu\rho\sigma}$ functions spectral functions for the measure $\Pi^{\mu\nu\rho\sigma}$. 

However not all spectral functions $m\mapsto\lambda(m)$ in this space correspond to different measures because we have the following 
\begin{align}\label{eq:spectral_property}
\lambda^{\mu\nu\rho\sigma}(m)=&\Pi^{\mu\nu\rho\sigma}((m,{\vct0}))\nonumber\\
=&\Pi^{\mu\nu\rho\sigma}(R(m,{\vct 0}))\nonumber\\
=&R^{\mu}{}_{\alpha}R^{\nu}{}_{\beta}R^{\rho}{}_{\gamma}R^{\sigma}{}_{\delta}\Pi^{\alpha\beta\gamma\delta}((m,{\vct0}))\nonumber\\
=&R^{\mu}{}_{\alpha}R^{\nu}{}_{\beta}R^{\rho}{}_{\gamma}R^{\sigma}{}_{\delta}\lambda^{\alpha\beta\gamma\delta}(m),
\end{align}
for all $m>0,R\in\text{Rotations}$ where $\text{Rotations}\subset O(1,3)^{\uparrow}$ is the rotation subgroup of $O(1,3)^{\uparrow}$. 
The most general solution to Eq.~\ref{eq:spectral_property} is a linear combination of 10 independent functions of $m$. To show this observe that any rotation $R\in$ Rotations acts as the identity on all time components of a tensor $\lambda^{\mu\nu\rho\sigma}$ and, to satisfy Eq.~\ref{eq:spectral_property} for all rotations $R$ each spatial sector must be an isotropic tensor under $O(3)$. We consider the following cases.
\begin{itemize}
\item No indices are 0. Then there are three independent isotropic tensors, $\delta_{ij}\delta_{kl},\delta_{ik}\delta_{jl}$ and $\delta_{il}\delta_{jk}$.
\item One index is 0. Then there are no isotropic tensors. (Note that the Levi-Civita symbol $\epsilon_{jkl}$ is isotropic with respect to $SO(3)$ but not $O(3)$.)
\item Two indices are 0. Then there is one isotropic tensor $\delta_{kl}$ and $\left(\begin{array}{c}
4\\
2
\end{array}\right)=6$ ways to place the two 0 indices.
\item Three indices are 0. There are no isotropic rank 1 tensors (vectors) for $O(3)$.
\item All indices are 0. Then this is a scalar contributing one dimension. 
\end{itemize}

Therefore the set of spectral functions is made up of functions from $(0,\infty)$ to a complex vector space of 10 dimensions that is, a 10 dimensional complex hyperplane in ${\bf C}^{256}$.

\section{Canonical form for Lorentz covariant 4-tensor valued measures on Minkowski space}

Let $\Omega_m=\Omega_m^{+}$ for $m\geq0$. Define the canonical form to be given by
\begin{align}\label{eq:canonical_form}
\Pi^{\mu\nu\rho\sigma}(\Gamma)=&\int_{m=0}^{\infty}\int_{q\in\Gamma}(q^2\eta^{\mu\nu}\eta^{\rho\sigma}\sigma_1(m)+q^2\eta^{\mu\rho}\eta^{\nu\sigma}\sigma_2(m)+q^2\eta^{\mu\sigma}\eta^{\nu\rho}\sigma_3(m)+\nonumber\\
&\eta^{\mu\nu}q^{\rho}q^{\sigma}\sigma_4(m)+\eta^{\mu\rho}q^{\nu}q^{\sigma}\sigma_5(m)+\eta^{\mu\sigma}q^{\nu}q^{\rho}\sigma_6(m)+\eta^{\nu\rho}q^{\mu}q^{\sigma}\sigma_7(m)+\nonumber\\
&\eta^{\nu\sigma}q^{\mu}q^{\rho}\sigma_8(m)+\eta^{\rho\sigma}q^{\mu}q^{\nu}\sigma_9(m)+q^{\mu}q^{\nu}q^{\rho}q^{\sigma}\sigma_{10}(m))\,\Omega_m(dq)\,dm\\
=&\sum_{i=1}^{10}\Pi^{\mu\nu\rho\sigma}_i(\Gamma),\nonumber
\end{align}
say, where $\sigma_i:(0,\infty)\rightarrow{\bf C}$ are locally integrable spectral functions.  

It is straightforward to show that, for all $i=1,\ldots,10$, $\Pi_i^{\mu\nu\rho\sigma}(\emptyset)=0$.
\begin{lemma} Suppose that $f:{\bf R}^4\rightarrow[0,\infty)$ is locally integrable. and $\Gamma\in{\mathcal B}_0({\bf R}^4)$ is compact.  Then
\[ \int_{m=0}^{\infty}\int_{q\in\Gamma}f(q)\,\Omega_m(dq)\,dm<\infty. \] 
\end{lemma}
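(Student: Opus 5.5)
The plan is to rewrite the iterated integral as a single Lebesgue integral over a region of ${\bf R}^4$ by a change of variables adapted to the mass hyperboloids, and then to bound that integral using the compactness of $\Gamma$ and the local integrability of $f$. Since $f\geq0$, Tonelli's theorem applies throughout, so all interchanges of integration order are legitimate and nothing is lost by passing between iterated and joint integrals.

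\textbf{Step 1 (parametrize).} By the definition of $\Omega_m=\Omega_m^{+}$,
\[ \int_{m=0}^{\infty}\int_{q\in\Gamma}f(q)\,\Omega_m(dq)\,dm=\int_{m=0}^{\infty}\int_{{\vct p}\in{\bf R}^3}\chi_{\Gamma}((\omega_m({\vct p}),{\vct p}))\,f((\omega_m({\vct p}),{\vct p}))\,\omega_m({\vct p})^{-1}\,d{\vct p}\,dm. \]
The map $\Phi:(m,{\vct p})\mapsto(\omega_m({\vct p}),{\vct p})$ is a diffeomorphism from $(0,\infty)\times{\bf R}^3$ onto the open forward cone $V^+=\{q:q^2>0,q^0>0\}$. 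The boundary $m=0$ has $dm$-measure zero, so it can be ignored.

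\textbf{Step 2 (Jacobian and compactness).} For fixed ${\vct p}$ we have $\partial q^0/\partial m=m/\omega_m({\vct p})$, so $dq=(m/\omega_m({\vct p}))\,dm\,d{\vct p}$. Hence the expression in Step 1 equals
\[ \int_{\Gamma\cap V^+}f(q)\,(q^2)^{-\frac{1}{2}}\,dq . \]
Because $\Gamma$ is compact, it lies in a set $\{|q^0|\leq M,|{\vct p}|\leq R\}$. Consequently only $m\leq M$ contributes, and in the $(m,{\vct p})$ coordinates the domain is contained in the bounded box $[0,M]\times\{|{\vct p}|\leq R\}$.

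\textbf{Step 3 (split).} Fix $\delta>0$ and split $\Gamma\cap V^+$ into two regions.
On $\{q^2\geq\delta^2\}$ the weight $(q^2)^{-1/2}$ is at most $\delta^{-1}$. This piece is therefore bounded by $\delta^{-1}\int_\Gamma f\,dq$, which is finite by local integrability of $f$.
On the collar $\{0<q^2<\delta^2\}$ around the light cone, I would return to the $(m,{\vct p})$ form of Step 1. There the weight is $\omega_m({\vct p})^{-1}\leq|{\vct p}|^{-1}$, and $|{\vct p}|^{-1}$ is integrable on the ball $\{|{\vct p}|\leq R\}$. The aim is to bound the collar contribution by $\int_0^\delta\int_{|{\vct p}|\leq R}f(\Phi(m,{\vct p}))\,|{\vct p}|^{-1}\,d{\vct p}\,dm$.

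\textbf{Main obstacle.} The collar estimate in Step 3 is the crux. The change of variables shows that Lebesgue-local integrability of $f$ only controls $\int f\,(m/\omega)\,dm\,d{\vct p}$, whereas the lemma requires control of $\int f\,\omega^{-1}\,dm\,d{\vct p}$. The missing factor $m^{-1}$ is singular exactly on the cone $C$.

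I would first try to close the gap as follows. Read ``locally integrable'' in the sense natural for this measure, namely that $f\circ\Phi$ is locally integrable in the coordinates $(m,{\vct p})$. This is the sense relevant to the canonical form (\ref{eq:canonical_form}), where $f$ is a polynomial in $q$ times $|\sigma_i(m)|$. With that reading, the bound $\omega^{-1}\leq|{\vct p}|^{-1}$ together with a Hölder or Fubini estimate on the box $[0,M]\times\{|{\vct p}|\leq R\}$ makes the collar contribution finite, and the two pieces together give the lemma.

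If the lemma must instead hold for $f$ merely Lebesgue-integrable on compacta, one needs an extra estimate of $f$ near $C$ to absorb the $m^{-1}$ factor. I expect that to be the delicate point, and possibly false without such an additional hypothesis on $f$.
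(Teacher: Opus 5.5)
Your proposal does not establish the statement, and no proposal can. As you suspected in your last paragraph, the lemma is false as written.

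Your Step~2 identity is right: since $dq=m\,\omega_m({\vct p})^{-1}\,dm\,d{\vct p}$, the left-hand side equals $\int_{\Gamma\cap V^+}f(q)\,(q^2)^{-\frac{1}{2}}\,dq$, where $V^+=\{q:q^2>0,q^0>0\}$. The weight $(q^2)^{-\frac{1}{2}}$ is locally integrable but not locally bounded, so take $f$ to be the weight itself: $f(q)=(q^2)^{-\frac{1}{2}}$ on $V^+$ and $f=0$ elsewhere.
\begin{itemize}
\item This $f$ is locally integrable. For $K_R=\{|q^0|\leq R,|{\vct q}|\leq R\}$ one has $\int_{K_R}f\,dq=\iint_{\Phi^{-1}(K_R)}\omega_m({\vct p})^{-1}\,dm\,d{\vct p}\leq\int_0^R\int_{|{\vct p}|\leq R}|{\vct p}|^{-1}\,d{\vct p}\,dm=2\pi R^3$.
\item Take the compact set $\Gamma=[0,2]\times\{|{\vct q}|\leq2\}$. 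For $0<m\leq1$ the inner integral is $m^{-1}\Omega_m(\Gamma)\geq m^{-1}\frac{4\pi}{3\sqrt{2}}$, because every $|{\vct p}|\leq1$ contributes with $\omega_m({\vct p})\leq\sqrt{2}$. So the $m$-integral diverges logarithmically at $m=0$.
\end{itemize}

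The paper's proof has exactly the hole you located. After the compactness reduction to $m<R$ and ${\vct q}\in K$ (your Step~2), it concludes ``because the integrand is locally integrable''. That integrand is $\chi_\Gamma f(\omega_m({\vct q}),{\vct q})\,\omega_m({\vct q})^{-1}$ in the variables $(m,{\vct q})$. Its local integrability does not follow from that of $f$; it is off by precisely the factor $m$ you identified.

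Your proposed repair reads the hypothesis as $f\circ\Phi\in L^1_{\rm loc}$ in $(m,{\vct p})$. This is still not enough, for a reason you did not flag. The weight $\omega_m({\vct p})^{-1}=(m^2+{\vct p}^2)^{-\frac{1}{2}}$ is unbounded at the apex $(m,{\vct p})=(0,{\vct 0})$. Integrability of a weight gives no control of its product with an arbitrary $L^1$ function.

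For instance, take $f(q)=(q^0)^{-7/2}$ on $V^+$, so that $f\circ\Phi=(m^2+{\vct p}^2)^{-7/4}$. This $f$ is locally integrable both on ${\bf R}^4$ and in $(m,{\vct p})$. Yet $(f\circ\Phi)\,\omega_m^{-1}=(m^2+{\vct p}^2)^{-9/4}$ is not integrable near the origin of the four-dimensional $(m,{\vct p})$ half-space. So the lemma's integral diverges for the same $\Gamma$, and no H\"older or Fubini estimate starting from $L^1$ alone can close the collar.

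What the paper's argument genuinely proves is the lemma for $f$ bounded on $\Gamma$. The integrand is then at most $(\sup_\Gamma f)\,|{\vct q}|^{-1}$ on $[0,R]\times K$, which is integrable.

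For the canonical form, what is actually needed is the product case $f(q)=g(q)\,|\sigma((q^2)^{\frac{1}{2}})|$, with $g$ locally bounded and $\sigma$ integrable on $(0,R)$. This follows directly by Fubini, giving the bound $2\pi R^2\sup_\Gamma|g|\int_0^R|\sigma(m)|\,dm$ whenever $\Gamma\subset K_R$.
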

{\noindent\bf Proof}
There exists $R>0$ such that $|q^0|<R$ for all $q=(q^0,{\vct q})\in\Gamma$. Therefore  
\[ \int_{q\in\Gamma}f(q)\,\Omega_m(dq)=\int_{{\vct q}\in{\bf R}^3}\chi_{\Gamma}((\omega_m({\vct q}),{\vct q}))f((\omega_m({\vct q}),{\vct q}))\omega_m({\vct q})^{-1}\,d{\vct q}=0,\forall m>R. \]
Thus, letting $K=\{{\vct q}\in{\bf R}^3:\exists q^0\in{\bf R},(q^0,{\vct q})\in\Gamma\}$, we have that
\[ \int_{m=0}^{\infty}\int_{q\in\Gamma}f(q)\,\Omega_m(dq)\,dm=\int_{m=0}^R\int_{{\vct q}\in K}\chi_{\Gamma}((\omega_m({\vct q}),{\vct q}))f(q)\omega_m({\vct q})^{-1}\,d{\vct q}\,dm<\infty, \] 
because the integrand is locally integrable and the domain of integration is compact. $\Box$

This proves, in particular, that all the $\Pi_i^{\mu\nu\rho\sigma}:{\mathcal B}_0({\bf R}^4)\rightarrow{\bf C}$. It is straightforward to show that $\Pi_i^{\mu\nu\rho\sigma}|_K$ is $\sigma$-additive for any compact $K\subset{\bf R}^4$. Therefore $\Pi_i^{\mu\nu\rho\sigma}$ is a complex 4-tensor valued measure for all $i=1,\ldots,10$. Thus $\Pi^{\mu\nu\rho\sigma}$ is a complex 4-tensor valued measure.
\begin{lemma} $\Pi_1^{\mu\nu\rho\sigma}$ is Lorentz covariant.
\end{lemma}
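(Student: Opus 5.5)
The plan is to reduce the statement to two facts: each mass-shell measure $\Omega_m$ is invariant under $O(1,3)^{\uparrow}$, and $\eta^{\mu\nu}$ is an invariant tensor. Explicitly,
\[ \Pi_1^{\mu\nu\rho\sigma}(\Gamma)=\eta^{\mu\nu}\eta^{\rho\sigma}\int_{m=0}^{\infty}\sigma_1(m)\int_{q\in\Gamma}q^2\,\Omega_m(dq)\,dm. \]
On the support of $\Omega_m$ we have $q^2=m^2$, so the inner integral equals $m^2\,\Omega_m(\Gamma)$. Hence $\Pi_1^{\mu\nu\rho\sigma}(\Gamma)=\eta^{\mu\nu}\eta^{\rho\sigma}\mu_1(\Gamma)$, where $\mu_1(\Gamma)=\int_0^\infty\sigma_1(m)m^2\Omega_m(\Gamma)\,dm$ is a scalar measure. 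By the preceding lemma, applied to $|\sigma_1|$ times a locally integrable weight, this is finite on relatively compact sets.

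The first step is to show $\Omega_m(\Lambda\Gamma)=\Omega_m(\Gamma)$ for all $\Lambda\in O(1,3)^{\uparrow}$ and $m\geq0$. I will use the standard fact that $\omega_m({\vct p})^{-1}d{\vct p}$ is the invariant measure on $H_m^{+}$. A clean route is to note that $\Omega_m$ arises from $2\delta(p^2-m^2)\theta(p^0)\,dp$. Lebesgue measure on ${\bf R}^4$ is invariant since $|\det\Lambda|=1$, $p^2$ is invariant, and orthochronous transformations preserve the sign of $p^0$ on $H_m^{+}$ (and on the cone for $m=0$). Hence $\int\chi_{\Lambda\Gamma}(p)\,\Omega_m(dp)=\int\chi_{\Gamma}(\Lambda^{-1}p)\,\Omega_m(dp)=\Omega_m(\Gamma)$.

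To avoid the delta function, I would instead verify this directly. Parity and time-reversal-free rotations act on ${\vct p}$ orthogonally and trivially preserve $d{\vct p}$ and $\omega_m$. For a boost along $x^1$ with rapidity $\theta$, compute the Jacobian of ${\vct p}\mapsto{\vct p}'$ restricted to the hyperboloid: $\partial p'^1/\partial p^1=\cosh\theta+\sinh\theta\,p^1/\omega_m=\omega_m'/\omega_m$. This gives $d{\vct p}'/\omega_m'=d{\vct p}/\omega_m$. Since $O(1,3)^{\uparrow}$ is generated by rotations, spatial reflections and such boosts, invariance follows. The $m=0$ case is the same computation with $\omega_0=|{\vct p}|$, valid away from the null set ${\vct p}=0$.

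The second step is the covariance computation itself. It uses only $\Lambda^{\mu}{}_{\alpha}\Lambda^{\nu}{}_{\beta}\eta^{\alpha\beta}=\eta^{\mu\nu}$, which follows from $\Lambda\eta\Lambda^{T}=\eta$, equivalent to $\Lambda^{T}\eta\Lambda=\eta$ for Lorentz matrices. We then get
\[ \Pi_1^{\mu\nu\rho\sigma}(\Lambda\Gamma)=\eta^{\mu\nu}\eta^{\rho\sigma}\mu_1(\Lambda\Gamma)=\Lambda^{\mu}{}_{\alpha}\Lambda^{\nu}{}_{\beta}\Lambda^{\rho}{}_{\gamma}\Lambda^{\sigma}{}_{\delta}\eta^{\alpha\beta}\eta^{\gamma\delta}\mu_1(\Gamma), \]
and the right-hand side is $\Lambda^{\mu}{}_{\alpha}\Lambda^{\nu}{}_{\beta}\Lambda^{\rho}{}_{\gamma}\Lambda^{\sigma}{}_{\delta}\Pi_1^{\alpha\beta\gamma\delta}(\Gamma)$. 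Here we also need $\Lambda\Gamma\in{\mathcal B}_0({\bf R}^4)$, which holds because $\Lambda$ is a linear homeomorphism.

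The main obstacle is the first step, the rigorous invariance of $\Omega_m$ under boosts, including the change of variables on ${\bf R}^3$ and its bijectivity onto the hyperboloid. Once that is done, the rest is immediate. The same argument will extend to $\Pi_2,\ldots,\Pi_{10}$ by replacing $q^2$ with factors such as $q^{\rho}q^{\sigma}$ and using $(\Lambda q)^{\rho}=\Lambda^{\rho}{}_{\gamma}q^{\gamma}$ under the substitution $q\mapsto\Lambda q$.
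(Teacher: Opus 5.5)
Your proof is correct and follows the paper's route: the paper also combines the $O(1,3)^{\uparrow}$-invariance of each $\Omega_m$ with $\Lambda^{\mu}{}_{\alpha}\Lambda^{\nu}{}_{\beta}\eta^{\alpha\beta}=\eta^{\mu\nu}$, using $(\Lambda q)^2=q^2$ where you use $q^2=m^2$ on the support, and it simply asserts the invariance of $\Omega_m$, which you verify via the boost Jacobian $\omega_m'/\omega_m$ and the generation of $O(1,3)^{\uparrow}$ by rotations, reflections and $x^1$-boosts. Your aside on finiteness via the preceding lemma is not needed for covariance, since the well-definedness of $\Pi_1$ is taken as given at this point, so nothing in your argument depends on it.
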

{\noindent{\bf Proof}}
\[ \Pi_1^{\mu\nu\rho\sigma}(\Gamma)=\int_{m=0}^{\infty}\int_{q\in\Gamma}q^2\eta^{\mu\nu}\eta^{\rho\sigma}\sigma_1(m)\,\Omega_m(dq)\,dm. \]
Therefore, by the Lorentz invariance of the measures $\Omega_m$,
\begin{align*}
\Pi_1^{\mu\nu\rho\sigma}(\Lambda\Gamma)=&\int_{m=0}^{\infty}\int_{q\in\Gamma}(\Lambda q)^2\eta^{\mu\nu}\eta^{\rho\sigma}\sigma_1(m)\,\Omega_m(dq)\,dm\\
=&\int_{m=0}^{\infty}\int_{q\in\Gamma}q^2\eta^{\mu\nu}\eta^{\rho\sigma}\sigma_1(m)\,\Omega_m(dq)\,dm=\Pi_1^{\mu\nu\rho\sigma}(\Gamma).
\end{align*}
But also, by the fundamental defining property of the Lorentz transformations,
\begin{align*}
\Lambda^{\mu}{}_{\alpha}\Lambda^{\nu}{}_{\beta}\Lambda^{\rho}{}_{\gamma}\Lambda^{\sigma}{}_{\delta}\Pi_1^{\alpha\beta\gamma\delta}(\Gamma)=&\eta^{\alpha\beta}\eta^{\gamma\delta}\Lambda^{\mu}{}_{\alpha}\Lambda^{\nu}{}_{\beta}\Lambda^{\rho}{}_{\gamma}\Lambda^{\sigma}{}_{\delta}\int_{m=0}^{\infty}\int_{q\in\Gamma}q^2\sigma_1(m)\,\Omega_m(dq)\,dm\\
=&\eta^{\mu\nu}\eta^{\rho\sigma}\int_{m=0}^{\infty}\int_{q\in\Gamma}q^2\sigma_1(m)\,\Omega_m(dq)\,dm=\Pi_1^{\mu\nu\rho\sigma}(\Gamma).
\end{align*}
$\Box$

It is straightforward to show, using similar calculations, that $\Pi_i$ all define Lorentz covariant Borel complex 4-tensor valued measures on Minkowski space.  

Now a vector $q\in\{q\in{\bf R}^4:q^2>0,q^0>0\}$ can be considered to be a function of $m$ (the parameter specifying in which orbit of $O(1,3)^{\uparrow}$ $q$ lies) and ${\vct q}$ (the projection of $q$ onto ${\bf R}^3$) according to
\[ q=q(m,{\vct q})=(\omega_m({\vct q}),{\vct q}), m>0,{\vct q}\in{\bf R}^3. \]
The Jacobian of the transformation is
\[ J(m,{\vct q})=m\omega_m({\vct q})^{-1}. \]
Also
\[ m=(q(m,{\vct q}))^{\frac{1}{2}}=(q^2)^{\frac{1}{2}}. \]
Therefore, by Eq.~\ref{eq:canonical_form},
\begin{align*}
\Pi^{\mu\nu\rho\sigma}(\Gamma)=&\int_{m=0}^{\infty}\int_{{\vct q}\in{\bf R}^3}\chi_{\Gamma}((\omega_m({\vct q}),{\vct q}))(m^2\eta^{\mu\nu}\eta^{\rho\sigma}\sigma_1(m)+m^2\eta^{\mu\rho}\eta^{\nu\sigma}\sigma_2(m)+\\
&m^2\eta^{\mu\sigma}\eta^{\nu\rho}\sigma_3(m)+\eta^{\mu\nu}q^{\rho}q^{\sigma}\sigma_4(m)+\eta^{\mu\rho}q^{\nu}q^{\sigma}\sigma_5(m)+\eta^{\mu\sigma}q^{\nu}q^{\rho}\sigma_6(m)+\\
&\eta^{\nu\rho}q^{\mu}q^{\sigma}\sigma_7(m)+\eta^{\nu\sigma}q^{\mu}q^{\rho}\sigma_8(m)+\eta^{\rho\sigma}q^{\mu}q^{\nu}\sigma_9(m)+\\
&q^{\mu}q^{\nu}q^{\rho}q^{\sigma}\sigma_{10}(m))\,\omega_m({\vct q})^{-1}\,d{\vct q}\,dm\\
=&\int_{q^2>0,q^0>0}\chi_{\Gamma}(q)(m^2\eta^{\mu\nu}\eta^{\rho\sigma}\sigma_1(m)+m^2\eta^{\mu\rho}\eta^{\nu\sigma}\sigma_2(m)+m^2\eta^{\mu\sigma}\eta^{\nu\rho}\sigma_3(m)+\\
&\eta^{\mu\nu}q^{\rho}q^{\sigma}\sigma_4(m)+\eta^{\mu\rho}q^{\nu}q^{\sigma}\sigma_5(m)+\eta^{\mu\sigma}q^{\nu}q^{\rho}\sigma_6(m)+\eta^{\nu\rho}q^{\mu}q^{\sigma}\sigma_7(m)+\\
&\eta^{\nu\sigma}q^{\mu}q^{\rho}\sigma_8(m)+\eta^{\rho\sigma}q^{\mu}q^{\nu}\sigma_9(m)+q^{\mu}q^{\nu}q^{\rho}q^{\sigma}\sigma_{10}(m))\\
&\omega_m({\vct q})^{-1}(m\omega_m({\vct q})^{-1})^{-1}\,dq\\
=&\int_{q^2>0,q^0>0}\chi_{\Gamma}(q)(m\eta^{\mu\nu}\eta^{\rho\sigma}\sigma_1(m)+m\eta^{\mu\rho}\eta^{\nu\sigma}\sigma_2(m)+m\eta^{\mu\sigma}\eta^{\nu\rho}\sigma_3(m)+\\
&m^{-1}\eta^{\mu\nu}q^{\rho}q^{\sigma}\sigma_4(m)+m^{-1}\eta^{\mu\rho}q^{\nu}q^{\sigma}\sigma_5(m)+m^{-1}\eta^{\mu\sigma}q^{\nu}q^{\rho}\sigma_6(m)+\nonumber\\
&m^{-1}\eta^{\nu\rho}q^{\mu}q^{\sigma}\sigma_7(m)+m^{-1}\eta^{\nu\sigma}q^{\mu}q^{\rho}\sigma_8(m)+m^{-1}\eta^{\rho\sigma}q^{\mu}q^{\nu}\sigma_9(m)+\\
&m^{-1}q^{\mu}q^{\nu}q^{\rho}q^{\sigma}\sigma_{10}(m))\,dq,
\end{align*}
where $m=m({\vct q})=(q^2)^{\frac{1}{2}}$. Therefore the density for the measure $\Gamma\mapsto\Pi^{\mu\nu\rho\sigma}(\Gamma)$ given by Eq.~\ref{eq:canonical_form} is
\begin{align}\label{eq:canonical_form_density}
\Pi^{\mu\nu\rho\sigma}(q)=&Q(\eta^{\mu\nu}\eta^{\rho\sigma}\sigma_1(Q)+\eta^{\mu\rho}\eta^{\nu\sigma}\sigma_2(Q)+\eta^{\mu\sigma}\eta^{\nu\rho}\sigma_3(Q))+\nonumber\\
&Q^{-1}(\eta^{\mu\nu}q^{\rho}q^{\sigma}\sigma_4(Q)+\eta^{\mu\rho}q^{\nu}q^{\sigma}\sigma_5(Q)+\eta^{\mu\sigma}q^{\nu}q^{\rho}\sigma_6(Q)+\nonumber\\
&\eta^{\nu\rho}q^{\mu}q^{\sigma}\sigma_7(Q)+\eta^{\nu\sigma}q^{\mu}q^{\rho}\sigma_8(Q)+\eta^{\rho\sigma}q^{\mu}q^{\nu}\sigma_9(Q)+\nonumber\\
&q^{\mu}q^{\nu}q^{\rho}q^{\sigma}\sigma_{10}(Q)),
\end{align}
for $q^2>0,q^0>0$ and 0 otherwise where $Q=(q^2)^{\frac{1}{2}}$. Thus, since $((m,{\vct0})^2)^{\frac{1}{2}}=m$ the spectral function $\lambda^{\mu\nu\rho\sigma}$ associated with the canonical form is
\begin{align*}
&\lambda^{\mu\nu\rho\sigma}(m)=\Pi^{\mu\nu\rho\sigma}((m,{\vct0}))\\
=&m(\eta^{\mu\nu}\eta^{\rho\sigma}\sigma_1(m)+\eta^{\mu\rho}\eta^{\nu\sigma}\sigma_2(m)+\eta^{\mu\sigma}\eta^{\nu\rho}\sigma_3(m))+\\
&m^{-1}(\eta^{\mu\nu}(m,{\vct0})^{\rho}(m,{\vct0})^{\sigma}\sigma_4(m)+\eta^{\mu\rho}(m,{\vct0})^{\nu}(m,{\vct0})^{\sigma}\sigma_5(m)+\\
&\eta^{\mu\sigma}(m,{\vct0})^{\nu}(m,{\vct0})^{\rho}\sigma_6(m)+\eta^{\nu\rho}(m,{\vct0})^{\mu}(m,{\vct0})^{\sigma}\sigma_7(m)+\\
&\eta^{\nu\sigma}(m,{\vct0})^{\mu}(m,{\vct0})^{\rho}\sigma_8(m)+\eta^{\rho\sigma}(m,{\vct0})^{\mu}(m,{\vct0})^{\nu}\sigma_9(m)+\\
&(m,{\vct0})^{\mu}(m,{\vct0})^{\nu}(m,{\vct0})^{\rho}(m,{\vct0})^{\sigma}\sigma_{10}(m))\\
=&m\left(\eta^{\mu\nu}\eta^{\rho\sigma}\sigma_1(m)+\eta^{\mu\rho}\eta^{\nu\sigma}\sigma_2(m)+\eta^{\mu\sigma}\eta^{\nu\rho}\sigma_3(m)+\eta^{\mu\nu}\eta^{0\rho}\eta^{0\sigma}\sigma_4(m)+\right.\\
&\eta^{\mu\rho}\eta^{0\nu}\eta^{0\sigma}\sigma_5(m)+\eta^{\mu\sigma}\eta^{0\nu}\eta^{0\rho}\sigma_6(m)+\eta^{\nu\rho}\eta^{0\mu}\eta^{0\sigma}\sigma_7(m)+\\
&\left.\eta^{\nu\sigma}\eta^{0\mu}\eta^{0\rho}\sigma_8(m)+\eta^{\rho\sigma}\eta^{0\mu}\eta^{0\nu}\sigma_9(m)+m^2\eta^{0\mu}\eta^{0\nu}\eta^{0\rho}\eta^{0\sigma}\sigma_{10}(m)\right).
\end{align*}
Therefore the set of canonical measure spectra is associated with the set of continuous  maps from $(0,\infty)$ to a 10 dimensional hyperplane in ${\bf C}^{256}$. Thus the canonical measure is the most general form of a causal Lorentz covariant Borel 4-tensor valued measure on Minkowski space which can be induced by a locally integrable density.

\section{Spectral calculus for Lorentz covariant 4-tensor valued measures on Minkowski space\label{section:spectral_calculus}}

\subsection{General case}

Consider an arbitrary causal Lorentz covariant Borel 4-tensor valued measure $\Pi^{\mu\nu\rho\sigma}$ on Minkowski space. Then, from above, if it can be induced by a locally integrable density then it must be in the canonical form defined by Eq.~\ref{eq:canonical_form}. We would like to compute the spectral functions $\sigma_i$ so that we can recover the density $\Pi^{\mu\nu\rho\sigma}$ using Eq.~\ref{eq:canonical_form_density}. To do so we will use a spectral calculus for Lorentz covariant complex 4-tensor valued measures analogous to the spectral calculus defined previously \cite{Symmetry,IJMPA,NPB} for Lorentz covariant complex 2-tensor valued measures on Minkowski space.

Let $\Gamma(a,b,\epsilon)\subset{\bf R}^4$ for $0<a<b,\epsilon>0$ be the open hyperbolic cylinder defined by
\[ \Gamma(a,b,\epsilon)=\bigcup_{m\in(a,b)}S(m,\epsilon), \]
where 
\[ S(m,\epsilon)=\{q\in{\bf R}^4:q^2=m^2,q^0>0,|{\vct q}|<\epsilon\}. \]
Define the function $g^{\mu\nu\rho\sigma}:\{(a,b,\epsilon):0<a<b,\epsilon>0\}\rightarrow{\bf C}$ for $\mu,\nu,\rho,\sigma=0,1,2,3$ by
\[ g^{\mu\nu\rho\sigma}(a,b,\epsilon)=\Pi^{\mu\nu\rho\sigma}(\Gamma(a,b,\epsilon)). \]
We assume that we have managed to compute the functions $g^{\mu\nu\rho\sigma}$ (we will do this computation below for the graviton vacuum polarization function). We now want to compute the spectral functions $\sigma_i$ from which we can determine the density for the measure.

Since $\Pi^{\mu\nu\rho\sigma}$ must be in canonical form we can compute
\begin{align*}
g^{\mu\nu\rho\sigma}(a,b,\epsilon)=&\Pi^{\mu\nu\rho\sigma}(\Gamma(a,b,\epsilon))\\
=&\int_{m=0}^{\infty}\int_{q\in\Gamma(a,b,\epsilon)}(q^2\eta^{\mu\nu}\eta^{\rho\sigma}\sigma_1(m)+q^2\eta^{\mu\rho}\eta^{\nu\sigma}\sigma_2(m)+q^2\eta^{\mu\sigma}\eta^{\nu\rho}\sigma_3(m)+\\
&\eta^{\mu\nu}q^{\rho}q^{\sigma}\sigma_4(m)+\eta^{\mu\rho}q^{\nu}q^{\sigma}\sigma_5(m)+\eta^{\mu\sigma}q^{\nu}q^{\rho}\sigma_6(m)+\eta^{\nu\rho}q^{\mu}q^{\sigma}\sigma_7(m)+\\
&\eta^{\nu\sigma}q^{\mu}q^{\rho}\sigma_8(m)+\eta^{\rho\sigma}q^{\mu}q^{\nu}\sigma_9(m)+q^{\mu}q^{\nu}q^{\rho}q^{\sigma}\sigma_{10}(m))\,\Omega_m(dq)\,dm\\
=&\int_{m=0}^{\infty}\int_{q\in{\bf R}^4}\chi_{\Gamma(a,b,\epsilon)}(q)(q^2\eta^{\mu\nu}\eta^{\rho\sigma}\sigma_1(m)+q^2\eta^{\mu\rho}\eta^{\nu\sigma}\sigma_2(m)+q^2\eta^{\mu\sigma}\eta^{\nu\rho}\sigma_3(m)+\\
&\eta^{\mu\nu}q^{\rho}q^{\sigma}\sigma_4(m)+\eta^{\mu\rho}q^{\nu}q^{\sigma}\sigma_5(m)+\eta^{\mu\sigma}q^{\nu}q^{\rho}\sigma_6(m)+\eta^{\nu\rho}q^{\mu}q^{\sigma}\sigma_7(m)+\\
&\eta^{\nu\sigma}q^{\mu}q^{\rho}\sigma_8(m)+\eta^{\rho\sigma}q^{\mu}q^{\nu}\sigma_9(m)+q^{\mu}q^{\nu}q^{\rho}q^{\sigma}\sigma_{10}(m))\,\Omega_m(dq)\,dm\\
\approx&\int_{m=0}^{\infty}\int_{{\vct q}\in{\bf R}^3}\chi_{(a,b)}(\omega_m({\vct q}))\chi_{B_{\epsilon}({\vct0})}({\vct q})(q^2\eta^{\mu\nu}\eta^{\rho\sigma}\sigma_1(m)+q^2\eta^{\mu\rho}\eta^{\nu\sigma}\sigma_2(m)+\\
&q^2\eta^{\mu\sigma}\eta^{\nu\rho}\sigma_3(m)+\eta^{\mu\nu}q^{\rho}q^{\sigma}\sigma_4(m)+\eta^{\mu\rho}q^{\nu}q^{\sigma}\sigma_5(m)+\eta^{\mu\sigma}q^{\nu}q^{\rho}\sigma_6(m)+\\
&\eta^{\nu\rho}q^{\mu}q^{\sigma}\sigma_7(m)+\eta^{\nu\sigma}q^{\mu}q^{\rho}\sigma_8(m)+\eta^{\rho\sigma}q^{\mu}q^{\nu}\sigma_9(m)+\\
&q^{\mu}q^{\nu}q^{\rho}q^{\sigma}\sigma_{10}(m))\omega_m({\vct q})^{-1}\,d{\vct q}\,dm,
\end{align*}
where $q=(\omega_m({\vct q}),{\vct q})$. Therefore, since $\omega_m({\vct0})=m$, $(\omega_m({\vct0}),{\vct0})^2=m^2$ and $(\omega_m({\vct0}),{\vct0})^{\mu}=m\eta^{0\mu}$, we have
\begin{align*}
g^{\mu\nu\rho\sigma}(a,b,\epsilon)\approx&\int_{m=a}^b\int_{{\vct q}\in B_{\epsilon}({\vct0})}(m^2\eta^{\mu\nu}\eta^{\rho\sigma}\sigma_1(m)+m^2\eta^{\mu\rho}\eta^{\nu\sigma}\sigma_2(m)+m^2\eta^{\mu\sigma}\eta^{\nu\rho}\sigma_3(m)+\\
&\eta^{\mu\nu}m^2\eta^{0\rho}\eta^{0\sigma}\sigma_4(m)+\eta^{\mu\rho}m^2\eta^{0\nu}\eta^{0\sigma}\sigma_5(m)+\eta^{\mu\sigma}m^2\eta^{0\nu}\eta^{0\rho}\sigma_6(m)+\\
&\eta^{\nu\rho}m^2\eta^{0\mu}\eta^{0\sigma}\sigma_7(m)+\eta^{\nu\sigma}m^2\eta^{0\mu}\eta^{0\rho}\sigma_8(m)+\eta^{\rho\sigma}m^2\eta^{0\mu}\eta^{0\nu}\sigma_9(m)+\\
&m^4\eta^{0\mu}\eta^{0\nu}\eta^{0\rho}\eta^{0\sigma}\sigma_{10}(m))m^{-1}\,d{\vct q}\,dm\\
\approx&\frac{4\pi}{3}\epsilon^3\int_{m=a}^bm[\eta^{\mu\nu}\eta^{\rho\sigma}\sigma_1(m)+\eta^{\mu\rho}\eta^{\nu\sigma}\sigma_2(m)+\eta^{\mu\sigma}\eta^{\nu\rho}\sigma_3(m)+\\
&\eta^{\mu\nu}\eta^{0\rho}\eta^{0\sigma}\sigma_4(m)+\eta^{\mu\rho}\eta^{0\nu}\eta^{0\sigma}\sigma_5(m)+\eta^{\mu\sigma}\eta^{0\nu}\eta^{0\rho}\sigma_6(m)+\\
&\eta^{\nu\rho}\eta^{0\mu}\eta^{0\sigma}\sigma_7(m)+\eta^{\nu\sigma}\eta^{0\mu}\eta^{0\rho}\sigma_8(m)+\eta^{\rho\sigma}\eta^{0\mu}\eta^{0\nu}\sigma_9(m)+\\
&m^2\eta^{0\mu}\eta^{0\nu}\eta^{0\rho}\eta^{0\sigma}\sigma_{10}(m)]\,dm.
\end{align*}
We define the functions $g^{\mu\nu\rho\sigma}_a(b)$ by
\[ g^{\mu\nu\rho\sigma}_a(b)=\lim_{\epsilon\rightarrow0}(\epsilon^{-3}g^{\mu\nu\rho\sigma}(a,b,\epsilon)). \]
Note that the symbol $\approx$ in the above calculations indicates equality in the limit as $\epsilon\rightarrow0$, this is proved rigorously in \cite{Symmetry}. Therefore
\begin{align*}
g^{\mu\nu\rho\sigma}_a(b)=&\lim_{\epsilon\rightarrow0}(\epsilon^{-3}g^{\mu\nu\rho\sigma}(a,b,\epsilon))\\
=&\frac{4\pi}{3}\int_{m=a}^bm[\eta^{\mu\nu}\eta^{\rho\sigma}\sigma_1(m)+\eta^{\mu\rho}\eta^{\nu\sigma}\sigma_2(m)+\eta^{\mu\sigma}\eta^{\nu\rho}\sigma_3(m)+\\
&\eta^{\mu\nu}\eta^{0\rho}\eta^{0\sigma}\sigma_4(m)+\eta^{\mu\rho}\eta^{0\nu}\eta^{0\sigma}\sigma_5(m)+\eta^{\mu\sigma}\eta^{0\nu}\eta^{0\rho}\sigma_6(m)+\eta^{\nu\rho}\eta^{0\mu}\eta^{0\sigma}\sigma_7(m)+\\
&\eta^{\nu\sigma}\eta^{0\mu}\eta^{0\rho}\sigma_8(m)+\eta^{\rho\sigma}\eta^{0\mu}\eta^{0\nu}\sigma_9(m)+m^2\eta^{0\mu}\eta^{0\nu}\eta^{0\rho}\eta^{0\sigma}\sigma_{10}(m)]\,dm.
\end{align*}
Thus, by the first fundamental theorem of calculus,
\begin{align*}
g^{\mu\nu\rho\sigma\prime}_a(b)=&\frac{4\pi}{3}b[\eta^{\mu\nu}\eta^{\rho\sigma}\sigma_1(b)+\eta^{\mu\rho}\eta^{\nu\sigma}\sigma_2(b)+\eta^{\mu\sigma}\eta^{\nu\rho}\sigma_3(b)+\\
&\eta^{\mu\nu}\eta^{0\rho}\eta^{0\sigma}\sigma_4(b)+\eta^{\mu\rho}\eta^{0\nu}\eta^{0\sigma}\sigma_5(b)+\eta^{\mu\sigma}\eta^{0\nu}\eta^{0\rho}\sigma_6(b)+\eta^{\nu\rho}\eta^{0\mu}\eta^{0\sigma}\sigma_7(b)+\\
&\eta^{\nu\sigma}\eta^{0\mu}\eta^{0\rho}\sigma_8(b)+\eta^{\rho\sigma}\eta^{0\mu}\eta^{0\nu}\sigma_9(b)+b^2\eta^{0\mu}\eta^{0\nu}\eta^{0\rho}\eta^{0\sigma}\sigma_{10}(b)].
\end{align*}
We will now compute the spectral functions $\sigma_i$ from the functions $g^{\mu\nu\rho\sigma\prime}_a(b)$ in a similar manner to that which was used in the case of Lorentz covariant 2-tensor valued measures on Minkowski space \cite{Symmetry,IJMPA,NPB} but the computation is a bit more complicated.
Let $a=a(\mu,\nu,\rho,\sigma)$ be the function of $b$ defined by $(a(\mu,\nu,\rho,\sigma))(b)=\frac{3}{4\pi b}g_a^{\mu\nu\rho\sigma\prime}(b)$ and                                                                                                      
\begin{align*}
X=(X(\mu,\nu,\rho,\sigma))(b)=&\eta^{\mu\nu}\eta^{\rho\sigma}\sigma_1(b)+\eta^{\mu\rho}\eta^{\nu\sigma}\sigma_2(b)+\eta^{\mu\sigma}\eta^{\nu\rho}\sigma_3(b)+\\
&\eta^{\mu\nu}\eta^{0\rho}\eta^{0\sigma}\sigma_4(b)+\eta^{\mu\rho}\eta^{0\nu}\eta^{0\sigma}\sigma_5(b)+\eta^{\mu\sigma}\eta^{0\nu}\eta^{0\rho}\sigma_6(b)+\\
&\eta^{\nu\rho}\eta^{0\mu}\eta^{0\sigma}\sigma_7(b)+\eta^{\nu\sigma}\eta^{0\mu}\eta^{0\rho}\sigma_8(b)+\eta^{\rho\sigma}\eta^{0\mu}\eta^{0\nu}\sigma_9(b)+\\
&b^2\eta^{0\mu}\eta^{0\nu}\eta^{0\rho}\eta^{0\sigma}\sigma_{10}(b).
\end{align*}
Then we have the following cases.\newpage
\begin{center}
{\bf Table 1.: The quantity $a(\mu,\nu,\rho,\sigma)=\frac{3}{4\pi b}g_a^{\mu\nu\rho\sigma\prime}(b)$}\\
\begin{tabular}{||ccc||} \hline
$(\mu,\nu,\rho,\sigma)$&$a(\mu,\nu,\rho,\sigma)$&$X$\\ \hline
(0,0,0,0)&$a_1$&$\sum_{i=1}^9\sigma_i(b)+b^2\sigma_{10}(b)$\\
(0,0,0,1)&$$&0\\
(0,0,1,0)&$$&0\\
(0,0,1,1)&$a_2$&$-\sigma_1(b)-\sigma_9(b)$\\
(0,1,0,0)&$$&0\\
(0,1,0,1)&$a_3$&$-\sigma_2(b)-\sigma_8(b)$\\
(0,1,1,0)&$a_4$&$-\sigma_3(b)-\sigma_7(b)$\\
(0,1,1,1)&$$&0\\
(1,0,0,0)&$$&0\\
(1,0,0,1)&$a_{5}$&$-\sigma_3(b)-\sigma_6(b)$\\
(1,0,1,0)&$a_{6}$&$-\sigma_2(b)-\sigma_5(b)$\\
(1,0,1,1)&$$&0\\
(1,1,0,0)&$a_{7}$&$-\sigma_1(b)-\sigma_4(b)$\\
(1,1,0,1)&$$&0\\
(1,1,1,0)&$$&0\\
(1,1,1,1)&$a_{8}$&$\sigma_1(b)+\sigma_2(b)+\sigma_3(b)$\\
\hline \hline
\end{tabular}
\end{center}
The $a(\mu,\nu,\rho,\sigma)$ are known functions. We would like to determine $\sigma_1(b),\ldots,\sigma_{10}(b)$. 
There are only 8 equations for 10 unknowns. Therefore the solution space will be a hyperplane parameterized by a number of parameters. Since $\sigma_{10}(m)$ only occurs in one place we will take $c=b^2\sigma_{10}(m)$ to be one of the parameters. Rewrite $a_1\rightarrow a_1-c$ i.e. absorb $-c$ into the $a_1$ for the moment. Then the first row of Table 1 becomes
\begin{center}
\begin{tabular}{||ccc||} \hline
$(\mu,\nu,\rho,\sigma)$&$a$&$X$\\ \hline
(0,0,0,0)&$a_1$&$\sum_{i=1}^9\sigma_i(b)$\\ 
\hline \hline
\end{tabular}
\end{center}
Then it is easy to see that if spectral functions $\sigma_i$ satisfy the equations of Table 1 then $\sum_{i=1}^8a_i=0$. Omitting the arguments on the $\sigma_i$ functions and the $a_j$ functions we have that the equations of Table 1 are equivalent to the following matrix equation 
\[ \left(\begin{array}{cccccccccc}
1&1&1&1&1&1&1&1&1\\
-1&0&0&0&0&0&0&0&-1\\
0&-1&0&0&0&0&0&-1&0\\
0&0&-1&0&0&0&-1&0&0\\
0&0&-1&0&0&-1&0&0&0\\
0&-1&0&0&-1&0&0&0&0\\
-1&0&0&-1&0&0&0&0&0\\
1&1&1&0&0&0&0&0&0
\end{array}\right)\left(\begin{array}{c}\sigma_1\\\sigma_2\\\sigma_3\\\sigma_4\\\sigma_5\\\sigma_6\\\sigma_7\\\sigma_8\\\sigma_9\end{array}\right)=\left(\begin{array}{c}a_1\\a_2\\a_3\\a_4\\a_5\\a_6\\a_7\\a_8\end{array}\right). \]
Carrying out row-echelon reduction we obtain the equivalent matrix equation
\begin{equation}\label{eq:sigma_solution}
\left(\begin{array}{cccccccccc}
1&0&0&0&0&0&0&0&1\\
0&1&0&0&0&0&0&1&0\\
0&0&1&0&0&0&0&-1&-1\\
0&0&0&1&0&0&0&0&-1\\
0&0&0&0&1&0&0&-1&0\\
0&0&0&0&0&1&0&1&1\\
0&0&0&0&0&0&1&1&1\\
0&0&0&0&0&0&0&0&0
\end{array}\right)\left(\begin{array}{c}\sigma_1\\\sigma_2\\\sigma_3\\\sigma_4\\\sigma_5\\\sigma_6\\\sigma_7\\\sigma_8\\\sigma_9\end{array}\right)=\left(\begin{array}{c}-a_2\\-a_3\\-a_1-a_4-a_5-a_6-a_7\\a_2-a_7\\a_3-a_6\\a_1+a_4+a_6+a_7\\a_1+a_5+a_6+a_7\\\sum_{i=1}^8a_i\end{array}\right).
\end{equation}
Let $c_1=\sigma_8,c_2=\sigma_9$. Then the general solution set to the set of equations is
\begin{align*}
\sigma_1=&-a_2-c_2,\\
\sigma_2=&-a_3-c_1,\\
\sigma_3=&-a_1-a_4-a_5-a_6-a_7+c_1+c_2,\\
\sigma_4=&a_2-a_7+c_2,\\
\sigma_5=&a_3-a_6+c_1,\\
\sigma_6=&a_1+a_4+a_6+a_7-c_1-c_2,\\
\sigma_7=&a_1+a_5+a_6+a_7-c_1-c_2,\\
\sigma_8=&c_1,\\
\sigma_9=&c_2,
\end{align*}
as long as $\sum_{i=1}^8a_i=0$.
Now we need to make the substitution $a_1\rightarrow a_1-c$ to express the solution in terms of the untransformed value of $a_1$. Thus the general solution is given by
\begin{align}\label{eq:sigma_solution_1}
\sigma_1=&-a_2-c_2,\nonumber\\
\sigma_2=&-a_3-c_1,\nonumber\\
\sigma_3=&-a_1+c-a_4-a_5-a_6-a_7+c_1+c_2,\nonumber\\
\sigma_4=&a_2-a_7+c_2,\nonumber\\
\sigma_5=&a_3-a_6+c_1,\\
\sigma_6=&a_1-c+a_4+a_6+a_7-c_1-c_2,\nonumber\\
\sigma_7=&a_1-c+a_5+a_6+a_7-c_1-c_2,\nonumber\\
\sigma_8=&c_1,\nonumber\\
\sigma_9=&c_2.\nonumber\\
\sigma_{10}=&b^{-2}c\nonumber,
\end{align}
with $c=\sum_{i=1}^8a_i$.
This general solution represents a 2 dimensional hyperplane in ${\bf C}^{10}$. Thus the spectral calculus for 4-tensor valued measures can only pin down the solution to within a 2-plane in ${\bf C}^{10}$ in the case of general causal Lorentz covariant Borel complex 4-tensor valued measures  on Minkowski space.

\subsection{Unique solution in the case when $\Pi^{\mu\nu\rho\sigma}$ has certain symmetry properties}

In the case when $\Pi^{\mu\nu\rho\sigma}$ has symmetry properties in addition to Lorentz covariance we can determine additional equations that must be satisfied by the spectral functions $\sigma_i$.
Consider the following symmetry properties.
\begin{align}\label{eq:Pi_tensor_symmetry}
&\Pi^{\mu\nu\rho\sigma}=\Pi^{\nu\mu\rho\sigma},\forall\mu,\nu,\rho,\sigma\in\{0,1,2,3\},\nonumber\\ 
&\Pi^{\mu\nu\rho\sigma}=\Pi^{\mu\nu\sigma\rho},\forall\mu,\nu,\rho,\sigma\in\{0,1,2,3\},\\
&\Pi^{\mu\nu\rho\sigma}=\Pi^{\rho\sigma\mu\nu},\forall\mu,\nu,\rho,\sigma\in\{0,1,2,3\}.\nonumber
\end{align}
First note that if the tensor $\Pi^{\mu\nu\rho\sigma}$ has these symmetry properties then the tensor $g_a^{\mu\nu\rho\sigma\prime}$ has these symmetry properties. Therefore, referring to Table 1, we must have that
\begin{align*}
&a_2=a_7,\\
&a_3=a_4=a_5=a_6.
\end{align*}
Let
\begin{align*}
&b_1=a_1,\\
&b_2=a_2=a_7,\\
&b_3=a_3=a_4=a_5=a_6,\\
&b_4=a_8,
\end{align*}
Therefore Eqns.~\ref{eq:sigma_solution_1} becomes
\begin{align}\label{eq:sigma_solution_2}
\sigma_1=&-b_2-c_2,\nonumber\\
\sigma_2=&-b_3-c_1,\nonumber\\
\sigma_3=&b_2+b_3+b_4+c_1+c_2,\nonumber\\
\sigma_4=&c_2,\nonumber\\
\sigma_5=&c_1,\\
\sigma_6=&-b_2-2b_3-b_4-c_1-c_2,\nonumber\\
\sigma_7=&-b_2-2b_3-b_4-c_1-c_2,\nonumber\\
\sigma_8=&c_1,\nonumber\\
\sigma_9=&c_2.\nonumber\\
\sigma_{10}=&b^{-2}\sum_{i=1}^8a_i=b^{-2}(b_1+2b_2+4b_3+b_4)\nonumber.
\end{align}
Now we have the following
\begin{lemma} $\Pi^{\mu\nu\rho\sigma}$ is symmetric in $(\mu,\nu)$ if and only if 
\begin{equation}\label{eq:symmetry_condition}
\sigma_3=\sigma_2,\sigma_7=\sigma_5\text{ and }\sigma_8=\sigma_6.
\end{equation}
\end{lemma}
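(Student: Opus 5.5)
We work directly with the canonical density Eq.~\ref{eq:canonical_form_density}. Since the density determines the measure up to a null set, and conversely, $\Pi^{\mu\nu\rho\sigma}$ (as a measure) is symmetric in $(\mu,\nu)$ if and only if the density is symmetric in $(\mu,\nu)$ for almost all $q$ in the forward cone. So it suffices to compute $\Pi^{\mu\nu\rho\sigma}(q)-\Pi^{\nu\mu\rho\sigma}(q)$ from Eq.~\ref{eq:canonical_form_density}.

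Under the swap $\mu\leftrightarrow\nu$ the ten tensor structures permute as follows:
\begin{itemize}
\item $\eta^{\mu\nu}\eta^{\rho\sigma}$, $\eta^{\mu\nu}q^\rho q^\sigma$, $\eta^{\rho\sigma}q^\mu q^\nu$ and $q^\mu q^\nu q^\rho q^\sigma$ are fixed;
\item $\eta^{\mu\rho}\eta^{\nu\sigma}\leftrightarrow\eta^{\mu\sigma}\eta^{\nu\rho}$;
\item $\eta^{\mu\rho}q^\nu q^\sigma\leftrightarrow\eta^{\nu\rho}q^\mu q^\sigma$;
\item $\eta^{\mu\sigma}q^\nu q^\rho\leftrightarrow\eta^{\nu\sigma}q^\mu q^\rho$.
\end{itemize}
Hence
\begin{align*}
\Pi^{\mu\nu\rho\sigma}(q)-\Pi^{\nu\mu\rho\sigma}(q)=&Q(\sigma_2-\sigma_3)(\eta^{\mu\rho}\eta^{\nu\sigma}-\eta^{\mu\sigma}\eta^{\nu\rho})\\
&+Q^{-1}(\sigma_5-\sigma_7)(\eta^{\mu\rho}q^\nu q^\sigma-\eta^{\nu\rho}q^\mu q^\sigma)\\
&+Q^{-1}(\sigma_6-\sigma_8)(\eta^{\mu\sigma}q^\nu q^\rho-\eta^{\nu\sigma}q^\mu q^\rho),
\end{align*}
with all $\sigma_i$ evaluated at $Q$. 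The ``if'' direction is then immediate.

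For the ``only if'' direction we must show that vanishing of this expression for almost all $q$ forces the three coefficient differences to vanish for almost all $m$. By Lorentz covariance (Lemma 1 and the subsequent remark) it suffices to evaluate at the rest vectors $q=(m,\vct0)$, i.e.\ to use the spectral values. Choosing suitable index values isolates each difference:
\begin{itemize}
\item $(\mu,\nu,\rho,\sigma)=(1,2,1,2)$ gives $m(\sigma_2-\sigma_3)$;
\item $(0,1,1,0)$ gives $\pm m^{-1}\cdot m^2(\sigma_5-\sigma_7)$ up to terms already shown to vanish;
\item $(0,1,0,1)$ similarly gives $\sigma_6-\sigma_8$.
\end{itemize}
Alternatively one can read these off Table 1. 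In the table, $(\mu,\nu)$-symmetry demands $a_3=a_6$ (entries $(0,1,0,1)$ vs.\ $(1,0,0,1)$ swapped appropriately) and $a_4=a_5$, i.e.\ $\sigma_2+\sigma_8=\sigma_2+\sigma_5$-type relations, but these only give combinations. The $(1,2,1,2)$ entry, which the table omits because it uses only index $1$, is needed to separate $\sigma_2$ from $\sigma_3$.

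The main obstacle is this linear independence step. The table restricted to indices $0$ and $1$ does not by itself separate the spectral functions, so I would carefully use entries involving two distinct spatial indices, which are not in Table 1, to obtain $\sigma_2=\sigma_3$. I would then combine this with the $(0,1,1,0)$ and $(0,1,0,1)$-type entries, comparing $\Pi^{01\rho\sigma}$ with $\Pi^{10\rho\sigma}$, to conclude $\sigma_7=\sigma_5$ and $\sigma_8=\sigma_6$ almost everywhere. Equality is almost everywhere, or everywhere if the $\sigma_i$ are continuous.
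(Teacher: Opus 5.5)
Your proof is correct and is essentially the paper's. Both compute $\Pi^{\mu\nu\rho\sigma}(q)-\Pi^{\nu\mu\rho\sigma}(q)$ from the canonical density and then choose index values. The paper works at generic $q$ on the hyperboloid, where e.g.\ $(\mu,\nu,\rho,\sigma)=(0,1,0,2)$ isolates $Q^{-1}q^1q^2(\sigma_5-\sigma_7)$ directly, whereas you pass to rest vectors via covariance and eliminate triangularly with $(1,2,1,2)$, $(0,1,1,0)$, $(0,1,0,1)$; your index computations check out.

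One slip in your Table~1 aside: the swap $\mu\leftrightarrow\nu$ pairs $(0,1,0,1)$ with $(1,0,0,1)$ and $(0,1,1,0)$ with $(1,0,1,0)$. It therefore gives $a_3=a_5$ and $a_4=a_6$, not $a_3=a_6$ and $a_4=a_5$, though your main argument does not rely on this.
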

{\noindent{\bf Proof}}
If the density $\Pi^{\mu\nu\rho\sigma}$ has the given symmetry property then the measure $\Pi^{\mu\nu\rho\sigma}$ has the property. For the converse if the measure has the property then the density has the property almost everywhere in which case it can be adjusted to have the symmetry property for all $q\in{\bf R}^4$. Now, by Eq.~\ref{eq:canonical_form_density} the density has the symmetry property if and only if
\begin{align}\label{eq:density_symmetry_property}
0=&\Pi^{\mu\nu\rho\sigma}(q)-\Pi^{\nu\mu\rho\sigma}(q)\nonumber\\
=&Q[\eta^{\mu\rho}\eta^{\nu\sigma}(\sigma_2(Q)-\sigma_3(Q))+\eta^{\mu\sigma}\eta^{\nu\rho}(\sigma_3(Q)-\sigma_2(Q))]+\nonumber\\
&Q^{-1}[\eta^{\mu\rho}q^{\nu}q^{\sigma}(\sigma_5(Q)-\sigma_7(Q))+\eta^{\nu\rho}q^{\mu}q^{\sigma}(\sigma_7(Q)-\sigma_5(Q))+\nonumber\\
&\eta^{\mu\sigma}q^{\nu}q^{\rho}(\sigma_6(Q)-\sigma_8(Q))+\eta^{\nu\sigma}q^{\mu}q^{\rho}(\sigma_8(Q)-\sigma_6(Q))],
\end{align} 
for all $\mu,\nu,\rho,\sigma\in\{0,1,2,3\},q\in{\bf R}^4$ s.t. $q^2=Q^2,q^0>0$ with $Q>0$. Therefore Eq.~\ref{eq:symmetry_condition} is sufficient for given symmetry property. 

Now, for the converse, suppose that Eq.~\ref{eq:density_symmetry_property} is satisfied. Then
\begin{align*}
0=&Q[(\eta^{\mu\rho}\eta^{\nu\sigma}-\eta^{\mu\sigma}\eta^{\nu\rho})(\sigma_2(Q)-\sigma_3(Q)+\\
&Q^{-1}[(\eta^{\mu\rho}q^{\nu}q^{\sigma}-\eta^{\nu\rho}q^{\mu}q^{\sigma})(\sigma_5(Q)-\sigma_7(Q))+\\
&(\eta^{\mu\sigma}q^{\nu}q^{\rho}-\eta^{\nu\sigma}q^{\mu}q^{\rho})(\sigma_6(Q)-\sigma_8(Q)].
\end{align*}
Taking $\mu=0,\rho=0,\nu=1,\sigma=2$ we have $Q^{-1}q^1q^2(\sigma_7(Q)-\sigma_5(Q))=0$   for all $q\in{\bf R}^4$ such that $q=(q^0,q^1,q^2,q^3),q^2=Q^2,q^0>0$. It follows that
\[ \sigma_7(Q)=\sigma_5(Q),\forall Q>0. \]
By making other choices the other two required equalities can be proved.
 $\Box$

By similar calculations we can show that $\Pi^{\mu\nu\rho\sigma}$ is symmetric in $(\rho,\sigma)$ if and only if $\sigma_3=\sigma_2,\sigma_6=\sigma_5$ and $\sigma_8=\sigma_7$ and furthermore that $\Pi^{\mu\nu\rho\sigma}$ is symmetric under the transformation $(\mu,\nu)\leftrightarrow(\rho,\sigma)$ if and only if $\sigma_9=\sigma_4$ and $\sigma_7=\sigma_6$.
 
Thus, if $\Pi^{\mu\nu\rho\sigma}$ has the symmetry properties given by Eq.~\ref{eq:Pi_tensor_symmetry} then
\begin{align*}
&\sigma_2=\sigma_3,\\
&\sigma_5=\sigma_6=\sigma_7=\sigma_8,\\
&\sigma_4=\sigma_9.
\end{align*}
Let
\begin{align*}
&\rho_1=\sigma_1,\\
&\rho_2=\sigma_2=\sigma_3,\\
&\rho_3=\sigma_4=\sigma_9,\\
&\rho_4=\sigma_5=\sigma_6=\sigma_7=\sigma_8.
\end{align*}
Then, the system Eq.~\ref{eq:sigma_solution_2} of equations is equivalent to the equations
\begin{align*}
&\rho_1=-b_2-\rho_3,\\
&\rho_2=-b_3-\rho_4,\\
&\rho_2=b_2+b_3+b_4+\rho_4+\rho_3,\\
&\rho_4=-b_2-2b_3-b_4-\rho_4-\rho_3.
\end{align*}
These equations can be written as
\[ \left(\begin{array}{cccc}
1&0&1&0\\
0&1&0&1\\
0&1&-1&-1\\
0&0&1&2
\end{array}\right)\left(\begin{array}{c}
\rho_1\\
\rho_2\\
\rho_3\\
\rho_4
\end{array}\right)=\left(\begin{array}{c}
-b_2\\
-b_3\\
b_2+b_3+b_4\\
-b_2-2b_3-b_4
\end{array}\right). \]
On carrying out row-echelon reduction this system of equations is equivalent to
\[ \left(\begin{array}{cccc}
1&0&0&-2\\
0&1&0&1\\
0&0&1&2\\
0&0&0&0
\end{array}\right)\left(\begin{array}{c}
\rho_1\\
\rho_2\\
\rho_3\\
\rho_4
\end{array}\right)=\left(\begin{array}{c}
2b_3+b_4\\
-b_3\\
-b_2-2b_3-b_4\\
0
\end{array}\right). \]
Hence the system of equations is feasible but it is underdetermined. We need another independent equation satisfied by the spectral functions $\rho_i$ in order to have a system with a unique solution. To do this we contract the tensor $\Pi^{\mu\nu\rho\sigma}$ with respect to its last two indices as follows
\[ \Sigma^{\mu\nu}=\eta_{\rho\sigma}\Pi^{\mu\nu\rho\sigma}. \]
From Eq.~\ref{eq:canonical_form_density} it is straightforward to compute that
\begin{align*}
\Sigma^{\mu\nu}(q)=&Q\eta^{\mu\nu}(4\sigma_1+\sigma_2+\sigma_3)+Q^{-1}(Q^2\eta^{\mu\nu}\sigma_4+q^{\mu}q^{\nu}(\sigma_5+\sigma_6+\sigma_7+\sigma_8+4\sigma_9)+\\
&Q^2q^{\mu}q^{\nu}\sigma_{10})\\
=&Q\eta^{\mu\nu}(4\sigma_1+\sigma_2+\sigma_3+\sigma_4)+Q^{-1}q^{\mu}q^{\nu}(\sigma_5+\sigma_6+\sigma_7+\sigma_8+4\sigma_9+Q^2\sigma_{10})\\
=&Q\eta^{\mu\nu}(4\rho_1+2\rho_2+\rho_3)+Q^{-1}q^{\mu}q^{\nu}(4\rho_4+4\rho_3+b_1+2b_2+4b_3+b_4)\\
=&Q^2\eta^{\mu\nu}\zeta_1(Q)+q^{\mu}q^{\nu}\zeta_2(Q),
\end{align*}
where
\begin{align*}
\zeta_1(Q)=&Q^{-1}(4\rho_1+2\rho_2+\rho_3),\\
\zeta_2(Q)=&Q^{-1}(4(\rho_3+\rho_4)+b_1+2b_2+4b_3+b_4).
\end{align*}
$\Sigma^{\mu\nu}$ is the density for a measure $\Gamma\mapsto\Sigma^{\mu\nu}(\Gamma)$ which is a causal Lorentz covariant Borel 2-tensor valued measure. We may now carry out the spectral calculus for 2-tensor measures on Minkowski space \cite{NPB} to obtain additional equations for the spectral functions $\rho_i$. Specifically, let $h(a,b,\epsilon)=\Sigma(\Gamma(a,b,\epsilon))$ and then compute $h_a^{\mu\nu}(b)=\lim_{\epsilon\rightarrow0}(\epsilon^{-3}h^{\mu\nu}(a,b,\epsilon))$. Then the spectral functions $\zeta_1$ and $\zeta_2$ can be computed from the formulas \cite{NPB}
\begin{align*}
\zeta_1(Q)=&-\frac{3}{4\pi}\frac{1}{Q}h_a^{ii\prime}(Q),i=1,2,3,\\
\zeta_2(Q)=&\frac{3}{4\pi}\frac{1}{Q}h_a^{00\prime}(Q)-\zeta_1(Q).
\end{align*}  
It is straightforward to show that the $h_a^{\mu\nu}$ functions can be computed from the formula
\begin{equation}\label{eq:compute_h_a}
h_a^{\mu\nu}(b)=\eta_{\rho\sigma}g_a^{\mu\nu\rho\sigma}(b),\forall b>0.
\end{equation} 
Letting
\begin{align}\label{eq:xi_i_solution}
\xi_1=&Q\zeta_1,\nonumber\\
\xi_2=&Q\zeta_2-(b_1+2b_2+4b_3+b_4),
\end{align}
we now have the following enlarged system of equations.
\[ \left(\begin{array}{cccc}
1&0&0&-2\\
0&1&0&1\\
0&0&1&2\\
4&2&1&0\\
0&0&4&4
\end{array}\right)\left(\begin{array}{c}
\rho_1\\
\rho_2\\
\rho_3\\
\rho_4
\end{array}\right)=\left(\begin{array}{c}
2b_3+b_4\\
-b_3\\
-b_2-2b_3-b_4\\
\xi_1\\
\xi_2
\end{array}\right). \]
On carrying out row-echelon reduction this system of equations is equivalent to
\begin{equation}\label{eq:rho_i_solution}
\left(\begin{array}{cccc}
1&0&0&0\\
0&1&0&0\\
0&0&1&0\\
0&0&0&1\\
0&0&0&1
\end{array}\right)\left(\begin{array}{c}
\rho_1\\
\rho_2\\
\rho_3\\
\rho_4
\end{array}\right)=\left(\begin{array}{c}
-\frac{1}{2}\xi_2-2b_2-2b_3-b_4\\
\frac{1}{4}\xi_2+b_2+b_3+b_4\\
\frac{1}{2}\xi_2+b_2+2b_3+b_4\\
\frac{1}{4}(\xi_1+b_2-4b_3-3b_4)\\
-\frac{1}{4}\xi_2-b_2-2b_3-b_4
\end{array}\right).
\end{equation}
These equations are feasible if and only if $\xi_1$ and $\xi_2$ satisfy the consistency condition
\[ \frac{1}{4}(\xi_1+b_2-4b_3-3b_4)=-\frac{1}{4}\xi_2-b_2-2b_3-b_4, \]
which is equivalent to the condition
\begin{equation}
\xi_1+\xi_2=-5b_2-4b_3-b_4,
\end{equation}
from which it follows that the system is consistent if and only if
\begin{equation}\label{eq:consistency_condition}
\frac{3}{4\pi Q}h_a^{00\prime}(Q)=(b_1(Q)-3b_2(Q)),\forall Q>0.
\end{equation}
However we can easily prove the following
\begin{lemma}
Let $\Pi^{\mu\nu\rho\sigma}$ be causal Lorentz covariant Borel 4-tensor valued measure on Minkowski space and define $\Sigma^{\mu\nu}=\eta_{\rho\sigma}\Pi^{\mu\nu\rho\sigma}$. Then if the spectral calculus is applied to $\Sigma^{\mu\nu}$ resulting in a function $h_a^{00\prime}$ then the consistency condition given by Eq.~\ref{eq:consistency_condition} is satisfied.
\end{lemma}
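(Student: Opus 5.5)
The plan is to verify Eq.~\ref{eq:consistency_condition} directly from the definitions. The point is that $h_a^{00\prime}$ is not new data: it is a fixed linear combination of the derivatives $g_a^{00\rho\sigma\prime}$ already recorded in Table~1. No property of the particular measure should be needed beyond Lorentz (in fact rotational) covariance.

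First I will justify Eq.~\ref{eq:compute_h_a}. Since $\Sigma^{\mu\nu}(\Gamma)=\eta_{\rho\sigma}\Pi^{\mu\nu\rho\sigma}(\Gamma)$ is a finite linear combination of the component measures, we have $h^{\mu\nu}(a,b,\epsilon)=\eta_{\rho\sigma}g^{\mu\nu\rho\sigma}(a,b,\epsilon)$ for every $a,b,\epsilon$. The limit $\epsilon\to0$ of $\epsilon^{-3}(\cdot)$ commutes with this finite sum, because each $g_a^{\mu\nu\rho\sigma}(b)$ exists by the spectral calculus of Section~\ref{section:spectral_calculus}. Differentiation in $b$ likewise commutes with the sum, giving $h_a^{\mu\nu\prime}=\eta_{\rho\sigma}g_a^{\mu\nu\rho\sigma\prime}$. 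Taking $\mu=\nu=0$ and using $\eta=\text{diag}(1,-1,-1,-1)$ yields
\[ h_a^{00\prime}(b)=g_a^{0000\prime}(b)-\sum_{i=1}^3g_a^{00ii\prime}(b). \]

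Next I will show that the three spatial terms coincide. By the expression for $g_a^{\mu\nu\rho\sigma\prime}(b)$ in terms of the $\sigma_i$, or directly by Eq.~\ref{eq:spectral_property} applied with rotations permuting the spatial axes, we get $g_a^{0011\prime}=g_a^{0022\prime}=g_a^{0033\prime}$. Indeed, with $\mu=\nu=0$ and $\rho=\sigma=i$, every term vanishes except $\eta^{00}\eta^{ii}\sigma_1+\eta^{ii}\eta^{00}\sigma_9$, which is $-\sigma_1-\sigma_9$ independently of $i$. Hence $h_a^{00\prime}=g_a^{0000\prime}-3g_a^{0011\prime}$.

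Multiplying by $\frac{3}{4\pi b}$ and recalling the definitions $a_1=a(0,0,0,0)$, $a_2=a(0,0,1,1)$ and $b_1=a_1$, $b_2=a_2$, I obtain $\frac{3}{4\pi Q}h_a^{00\prime}(Q)=b_1(Q)-3b_2(Q)$ for all $Q>0$, which is Eq.~\ref{eq:consistency_condition}.

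The only delicate step is the interchange of the limit $\epsilon\to0$ and the $b$-derivative with the contraction. Since the contraction is a finite sum, this reduces to the existence of each $g_a^{\mu\nu\rho\sigma}$ and its derivative, which is assumed from the spectral calculus and \cite{Symmetry}. So I expect no genuine obstacle, only bookkeeping of signs from $\eta$.
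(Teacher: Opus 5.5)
Your proposal is correct and matches the paper's proof: you contract with $\eta_{\rho\sigma}$ to get $h_a^{00\prime}=g_a^{0000\prime}-\sum_{i=1}^3g_a^{00ii\prime}=\frac{4\pi}{3}Q(a_1-3a_2)$, then identify $a_1,a_2$ with $b_1,b_2$. Beyond the paper, you also justify that the finite contraction commutes with the $\epsilon\to0$ limit and the $b$-derivative, and that $g_a^{00ii\prime}$ does not depend on $i$; the latter also follows from the covariance of the measure on the rotation-invariant sets $\Gamma(a,b,\epsilon)$, so no density is needed.
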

{\noindent{\bf Proof}} We have
\[ h_a^{\mu\nu\prime}=\eta_{\rho\sigma}g_a^{\mu\nu\rho\sigma\prime}. \]
Therefore
\begin{align*}
h_a^{00\prime}(Q)=&\eta_{\rho\sigma}g^{00\rho\sigma\prime}(Q)=g_a^{0000\prime}(Q)-\sum_{i=1}^3g_a^{00ii\prime}(Q)=\frac{4\pi}{3}Q(a_1(Q)-3a_2(Q))\\
=&\frac{4\pi}{3}Q(b_1(Q)-3b_2(Q)),\forall Q>0.
\end{align*}
$\Box$

Therefore, assuming that the measure $\Pi^{\mu\nu\rho\sigma}$ satisfies the symmetry conditions given by Eq.~\ref{eq:Pi_tensor_symmetry} it and its density can be completely and uniquely  determined using the spectral calculus. 

\section{Summary of spectral calculus algorithm in the presence of symmetries\label{section:spectral_calculus_summary}}

The computations of the last section may seem complicated and intricate. However they can be summarized resulting in a very simple algorithm for computing the function (density) associated with any causal Lorentz covariant Borel complex 4-tensor valued measure $\Pi^{\mu\nu\rho\sigma}$ on Minkowski space which satisfies the symmetry condition given by Eq.~\ref{eq:Pi_tensor_symmetry}.

Having computed the functions $g^{\mu\nu\rho\sigma\prime}_a$ for $\Pi^{\mu\nu\rho\sigma}$   (which we will do below for the case of the graviton vacuum polarization tensor) compute
\[ h_a^{\mu\nu\prime}=\eta_{\rho\sigma}g^{\mu\nu\rho\sigma\prime}, \]
and
\begin{align*}
&b_1=g_a^{0000\prime},\\
&b_2=g_a^{0011\prime},\\
&b_3=g_a^{0101\prime},\\
&b_4=g_a^{1111\prime},
\end{align*} 
and the $\xi_i$ functions using Eq.~\ref{eq:xi_i_solution}.
Then determine the four spectral functions $\rho_i$ by plugging the functions $\xi_1,\xi_2,b_2,b_3$ and $b_4$ into the equation Eq.~\ref{eq:rho_i_solution}. Then compute the 10 spectral functions $\sigma_i$ using the equations
\begin{align*}
&\sigma_1=\rho_1,\\
&\sigma_2=\sigma_3=\rho_2,\\
&\sigma_4=\sigma_9=\rho_3,\\
&\sigma_5=\sigma_6=\sigma_7=\sigma_8=\rho_4\\
&\sigma_{10}=b^{-2}(b_1+2b_2+4b_3+b_4).
\end{align*}
Finally, plug the spectral functions $\sigma_i$ into the equation Eq.~\ref{eq:canonical_form_density} to obtain the density $\Pi^{\mu\nu\rho\sigma}$.

\section{Lorentz covariance of objects occurring in QG}

There are two basic types of (mathematical) objects occurring in all formulations of pQG (perturbative QG) and also non-perturbative approaches to QG. These are the propagators and the vertices. QG has an infinite sequence of vertex types of higher and higher order. These objects are all Lorentz covariant in all modern formulations of QG.

In the  De Donder harmonic (Feynman) gauge the QG graviton propagator is given by \cite{Duff}
\begin{align*}
D^{\mu\nu\rho\sigma}(q)=&\frac{1}{q^2+i\epsilon}(\eta^{\mu\rho}\eta^{\nu\sigma}+\eta^{\mu\sigma}\eta^{\nu\rho}-\eta^{\mu\nu}\eta^{\rho\sigma})\\
=&\frac{1}{q^2+i\epsilon}P^{\mu\nu\rho\sigma},
\end{align*}
say. This is clearly in the form of the canonical form. Therefore it is Lorentz covariant.

The graviton vertices in classical QG are obtained from the Taylor expansion of the Einstein-Hilbert Lagrangian which is constructed from a curvature scalar. Such a scalar is generally covariant and therefore, in particular, Lorentz covariant. Tensors derived from it are Lorentz covariant. Hence the graviton vertices are Lorentz covariant \cite{DewittII}.

In other, more modern approaches to QG such as spin foam loop quantum gravity or string theory the graviton vertices are explicitly designed and constructed to be Lorentz covariant.

In particular, the 3-graviton vertex $V^{(3)}_{\alpha_1\beta_1\alpha_2\beta_2\alpha_3\beta_3}$ is Lorentz covariant. Explicit expressions for the classical form for $V^{(3)}$ can be found in the references \cite{DewittIII,Duff}. The components of $V^{(3)}$ are polynomial functions of its momentum arguments $p_1,p_2,p_3$, of order $\leq2$. $V^{(3)}$ has certain symmetry properties \cite{Duff} which we will discuss below.

\section{Representation of the graviton vacuum polarization tensor $\Pi^{\mu\nu\rho\sigma}$ using CSR}

We follow the convention of Duff \cite{Duff} given by Fig. \ref{fig:convention}.

\begin{figure} 
\centering
\includegraphics[width=8cm]{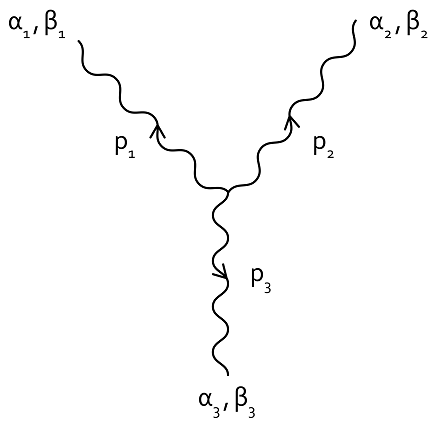}
\caption{Convention for graviton Feynman diagrams, example of the $V^{(3)}$ 3-graviton vertex  \label{fig:convention}}
\end{figure}
 
Applying the Feynman rules the graviton vacuum polarization function which is associated with the Feynman diagram of Fig. \ref{fig:VP} is given by
\[ \Pi^{\mu\nu\rho\sigma}=\text{Trace}\int(\text{vertex$^{\mu\nu}\times$propagator$\times$vertex$^{\rho\sigma}\times$propagator)}, \]
i.e., in detail,
\begin{align}\label{eq:VP_Feynman_integral}
\Pi^{\mu\nu\rho\sigma}(q)=&\int\frac{dk}{(2\pi)^4}i\kappa V^{(3)\mu\nu}_{\alpha_1\beta_1\alpha_2\beta_2}(-q,q-k,k)iD^{\alpha_2\beta_2\alpha_3\beta_3}(k-q)\nonumber\\
&i\kappa V^{(3)\rho\sigma}_{\alpha_3\beta_3\alpha_4\beta_4}(k-q,q,-k)iD^{\alpha_4\beta_4\alpha_1\beta_1}(k)\nonumber\\
=&\frac{\kappa^2}{(2\pi)^4}\int V^{(3)\mu\nu}_{\alpha_1\beta_1\alpha_2\beta_2}(-q,q-k,k)D^{\alpha_2\beta_2\alpha_3\beta_3}(k-q)\nonumber\\
&V^{(3)\rho\sigma}_{\alpha_3\beta_3\alpha_4\beta_4}(k-q,q,-k)D^{\alpha_4\beta_4\alpha_1\beta_1}(k)\,dk.
\end{align}

\begin{figure} 
\centering
\includegraphics[width=11cm]{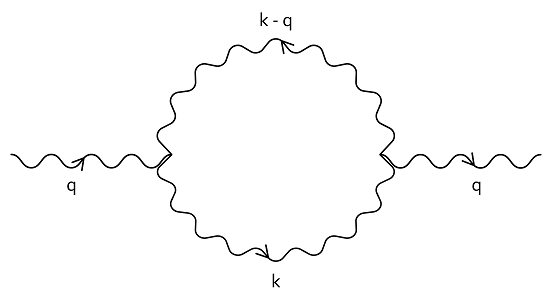}
\caption{Feynman diagram for graviton vacuum polarization (VP)\label{fig:VP}}
\end{figure}

Unfortunately, as is well known, this integral is not convergent. This has been an issue for about half a century and the main approach to dealing with this problem has been to use classical, renormalization based, techniques for circumventing divergences. The problem with this approach is that QG involves a multitude of different graviton vertices of higher and higher orders and thus to solve the problem using renormalization based methods would require an infinite number of counterterms which renders the theory infeasible and physically meaningless.

Our approach to the problem is to use CSR in which a divergent Feynman integral is represented by a Lorentz covariant measure. We then use the spectral calculus described above to compute the density for this measure. The density is well defined and finite and can be used in place of $\Pi^{\mu\nu\rho\sigma}$ in all calculations involving this function. 

There is, using this method, no counterterms and the function and its higher order counterparts can be used an arbitrary number of times in QG calculations.

\subsection{Determination of a well defined measure representing $\Pi^{\mu\nu\rho\sigma} $}

We now carry out a formal argument starting with the ``function"  $\Pi^{\mu\nu\rho\sigma}$ and resulting in a well defined Lorentz covariant Borel complex tensor valued measure $\Pi^{\mu\nu\rho\sigma}$ on Minkowski space. As mentioned above, there must be one or more ``formal" non-rigorous steps in such an argument. One can not start with something not well defined (the Feynman integral) and arrive at something well defined (a Lorentz covariant measure) by a sequence of rigorous steps. In the argument below the only non-rigorous step is the interchange at one point in the order of integration which is not universally valid.

We now assume ``pretend" that the function defined by the Feynman integral given by Eq.~\ref{eq:VP_Feynman_integral} exists and compute the measure generated by taking it as a density as follows
\begin{align*}
&\Pi^{\mu\nu\rho\sigma}(\Gamma)=\int_{\Gamma}\Pi^{\mu\nu\rho\sigma}(q)\,dq=\int\chi_{\Gamma}(q)\Pi^{\mu\nu\rho\sigma}(q)\,dq\\
=&\frac{\kappa^2}{(2\pi)^4}\int\chi_{\Gamma}(q)\left[\int V^{(3)\mu\nu}_{\alpha_1\beta_1\alpha_2\beta_2}(-q,q-k,k)D^{\alpha_2\beta_2\alpha_3\beta_3}(k-q)\right.\\
&\left.V^{(3)\rho\sigma}_{\alpha_3\beta_3\alpha_4\beta_4}(k-q,q,-k)D^{\alpha_4\beta_4\alpha_1\beta_1}(k)\,dk\frac{}{}\right]\,dq\\
=&\frac{\kappa^2}{(2\pi)^4}\int\chi_{\Gamma}(q)V^{(3)\mu\nu}_{\alpha_1\beta_1\alpha_2\beta_2}(-q,q-k,k)\frac{1}{(k-q)^2+i\epsilon}P^{\alpha_2\beta_2\alpha_3\beta_3}\\
&V^{(3)\rho\sigma}_{\alpha_3\beta_3\alpha_4\beta_4}(k-q,q,-k)\frac{1}{k^2+i\epsilon}P^{\alpha_4\beta_4\alpha_1\beta_1}\,dk\,dq\\
=&\frac{\kappa^2}{(2\pi)^4}\int\chi_{\Gamma}(q)V^{(3)\mu\nu}_{\alpha_1\beta_1\alpha_2\beta_2}(-q,q-k,k)\frac{1}{(k-q)^2+i\epsilon}P^{\alpha_2\beta_2\alpha_3\beta_3}\\
&V^{(3)\rho\sigma}_{\alpha_3\beta_3\alpha_4\beta_4}(k-q,q,-k)\frac{1}{k^2+i\epsilon}P^{\alpha_4\beta_4\alpha_1\beta_1}\,dq\,dk\\
=&\frac{\kappa^2}{(2\pi)^4}\int\chi_{\Gamma}(q+k)V^{(3)\mu\nu}_{\alpha_1\beta_1\alpha_2\beta_2}(-q-k,q,k)\frac{1}{q^2+i\epsilon}P^{\alpha_2\beta_2\alpha_3\beta_3}\\
&V^{(3)\rho\sigma}_{\alpha_3\beta_3\alpha_4\beta_4}(-q,q+k,-k)\frac{1}{k^2+i\epsilon}P^{\alpha_4\beta_4\alpha_1\beta_1}\,dq\,dk\\
=&-\frac{\kappa^2}{16\pi^2}\int\chi_{\Gamma}(q+k)V^{(3)\mu\nu}_{\alpha_1\beta_1\alpha_2\beta_2}(-q-k,q,k)P^{\alpha_2\beta_2\alpha_3\beta_3}\\
&V^{(3)\rho\sigma}_{\alpha_3\beta_3\alpha_4\beta_4}(-q,q+k,-k)P^{\alpha_4\beta_4\alpha_1\beta_1}\,\Omega_0^{\pm}(dq)\,\Omega_0^{\pm}(dk),
\end{align*}
where in the fourth line we have used the formal substitution $dk\,dq\rightarrow\,dq\,dk$, in the second last line we have used the translation invariance of the Lebesgue measure on ${\bf R}^4$ in the inner integral and in the last line we have used the identification
\[ \frac{1}{q^2-m^2+i\epsilon}\,dq\leftrightarrow-i\pi\,\Omega_m^{\pm}(dq),m\geq0, \]
which is proved in Ref.~\cite{Symmetry}.
Thus (taking the + sign for both the $\Omega_0^{\pm}$) we have that $\Pi^{\mu\nu\rho\sigma}$ is causal and
\begin{align}\label{eq:VP_measure_formula}
\Pi^{\mu\nu\rho\sigma}(\Gamma)=&-\frac{\kappa^2}{16\pi^2}\int\chi_{\Gamma}(q+k)V^{(3)\mu\nu}_{\alpha_1\beta_1\alpha_2\beta_2}(-q-k,q,k)P^{\alpha_2\beta_2\alpha_3\beta_3}\nonumber\\
&V^{(3)\rho\sigma}_{\alpha_3\beta_3\alpha_4\beta_4}(-q,q+k,-k)P^{\alpha_4\beta_4\alpha_1\beta_1}|{\vct q}|^{-1}|{\vct k}|^{-1}\,d{\vct q}\,d{\vct k}\\
=&-\frac{\kappa^2}{16\pi^2}\int_{s_1=0}^{\infty}\int_{\omega_1\in S^2}\int_{s_2=0}^{\infty}\int_{\omega_2\in S^2}\chi_{\Gamma}(q+k)V^{(3)\mu\nu}_{\alpha_1\beta_1\alpha_2\beta_2}(-q-k,q,k)P^{\alpha_2\beta_2\alpha_3\beta_3}\nonumber\\
&V^{(3)\rho\sigma}_{\alpha_3\beta_3\alpha_4\beta_4}(-q,q+k,-k)P^{\alpha_4\beta_4\alpha_1\beta_1}s_1s_2\,d\omega_2\,ds_2\,d\omega_1\,ds_1,\nonumber
\end{align}
in which $S^2$ is the (surface of) the unit sphere in ${\bf R}^3$,  ${\vct q}=s_2\omega_2,{\vct k}=s_1\omega_1,q=(s_2,{\vct q}),k=(s_1,{\vct k})$. 

Clearly $\Pi^{\mu\nu\rho\sigma}(\emptyset)=0$. Suppose that $\Gamma$ is a relatively compact Borel subset of ${\bf R}^4$. Then there exists $R>0$ such that $|q^0|<R,\forall q=(q^0,{\vct q})\in\Gamma$. Then, for all $q=(q^0,{\vct q}),k=(k^0,{\vct k})\in{\bf R}^4$,
\[ \chi_{\Gamma}(q+k)=1\Rightarrow(q+k)\in\Gamma\Rightarrow(|{\vct q}|+|{\vct k}|,{\vct q}+{\vct k})\in\Gamma\Rightarrow|{\vct q}|+|{\vct k}|<R. \]
Thus
\[ |{\vct q}|>R\text{ or }|{\vct k}|>R\Rightarrow\chi_{\Gamma}(q+k)=0. \]Then
\begin{align*}
\Pi^{\mu\nu\rho\sigma}(\Gamma)=&-\frac{\kappa^2}{16\pi^2}\int_{s_1=0}^{R}\int_{\omega_1\in S^2}\int_{s_2=0}^{R}\int_{\omega_2\in S^2}\chi_{\Gamma}(q+k)V^{(3)\mu\nu}_{\alpha_1\beta_1\alpha_2\beta_2}(-q-k,q,k)P^{\alpha_2\beta_2\alpha_3\beta_3}\\
&V^{(3)\rho\sigma}_{\alpha_3\beta_3\alpha_4\beta_4}(-q,q+k,-k)P^{\alpha_4\beta_4\alpha_1\beta_1}s_1s_2\,d\omega_2\,ds_2\,d\omega_1\,ds_1.
\end{align*}
The $P^{\mu\nu\rho\sigma}$ are just numbers and the $V^{(3)\mu\nu}_{\alpha_1\beta_1\alpha_2\beta_2}$ are polynomial functions of their arguments. Therefore $\Pi^{\mu\nu\rho\sigma}(\Gamma)$, being the integral of a locally integrable function over a compact set, is finite. This is true for all relatively compact Borel subsets of Minkowski space. Thus $\Pi^{\mu\nu\rho\sigma}:{\mathcal B}_0({\bf R}^4)\rightarrow{\bf C}$. It is straightforward to show that $\Pi^{\mu\nu\rho\sigma}|_K$ is $\sigma$-additive for all compact $K\subset{\bf R}^4$. Therefore $\Pi^{\mu\nu\rho\sigma}$ is a well defined causal Borel complex tensor valued measure on Minkowski space. 

\subsection{Symmetry properties of $\Pi^{\mu\nu\rho\sigma}$}

It is straightforward, using the Lorentz covariance properties of the 3-graviton vertex $V^{(3)}$ and the tensor $P^{\mu\nu\rho\sigma}$ associated with the propagator $D^{\mu\nu\rho\sigma}$, to prove that $\Pi^{\mu\nu\rho\sigma}$ is Lorentz covariant. 

From Eq.~\ref{eq:VP_measure_formula} $\Pi^{\mu\nu\rho\sigma}$ is given by
\begin{align*}
\Pi^{\mu\nu\rho\sigma}(\Gamma)=&-\frac{\kappa^2}{16\pi^2}\int\chi_{\Gamma}(q+k)V^{(3)\mu\nu}_{\alpha_1\beta_1\alpha_2\beta_2}(-q-k,q,k)P^{\alpha_2\beta_2\alpha_3\beta_3}\\
&V^{(3)\rho\sigma}_{\alpha_3\beta_3\alpha_4\beta_4}(-q,q+k,-k)P^{\alpha_4\beta_4\alpha_1\beta_1}|{\vct q}|^{-1}|{\vct k}|^{-1}\,d{\vct q}\,d{\vct k}.
\end{align*}
To derive additional symmetry properties than Lorentz covariance we will need to delve into the explicit form of the classical 3-graviton vertex $V^{(3)}$. From (Duff, 1975, p. 89) \cite{Duff} we have
\begin{align*}
&V^{(3)}_{\alpha_1\beta_1\alpha_2\beta_2\alpha_3\beta_3}(p_1,p_2,p_3)\\
=&-\text{sym P}_6\frac{1}{8}[-4\eta_{\alpha_3\alpha_2}\eta_{\beta_2\alpha_1}p_2\cdot p_3+2\eta_{\alpha_2\beta_2}\eta_{\alpha_3\alpha_1}\eta_{\beta_3\beta_1}p_2\cdot p_3-\eta_{\alpha_2\beta_2}\eta_{\alpha_3\beta_3}p_{2\alpha_1}p_{3\beta_1}\\
+&2\eta_{\alpha_3\alpha_2}\eta_{\beta_2\beta_3}p_{2\alpha_1}p_{3\beta_1}+4\eta_{\alpha_2\alpha_1}\eta_{\beta_3\beta_1}p_{2\alpha_3}p_{3\beta_2}],
\end{align*}
where ``sym" indicates a symmetrization on each index pair $\alpha_1\beta_1$,$\alpha_2\beta_2$ and $\alpha_3\beta_3$ and P$_6$ means that a summation is to be performed over all six permutations of the momentum-index triplets $\alpha_1\beta_1p_1,\alpha_2\beta_2p_2$ and $\alpha_3\beta_3p_3$.

The first symmetry property of $V^{(3)}$ clearly results in the symmetry of $\Pi^{\mu\nu\rho\sigma}$ under either the transformations $\mu\leftrightarrow\nu$ or $\rho\leftrightarrow\sigma$.

For the other symmetry property of $\Pi^{\mu\nu\rho\sigma}$ we have the following.
\begin{lemma}
\[ \Pi^{\mu\nu\rho\sigma}=\Pi^{\rho\sigma\mu\nu},\forall\mu,\nu,\rho,\sigma\in\{0,1,2,3\}. \]
\end{lemma}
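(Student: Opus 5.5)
The plan is to work directly with the measure formula Eq.~\ref{eq:VP_measure_formula} and show that swapping $(\mu,\nu)\leftrightarrow(\rho,\sigma)$ in the integrand can be undone by a measure-preserving change of variables together with the symmetries of $V^{(3)}$ and $P$. Write the integrand as
\[ F^{\mu\nu\rho\sigma}(q,k)=V^{(3)\mu\nu}_{\alpha_1\beta_1\alpha_2\beta_2}(-q-k,q,k)P^{\alpha_2\beta_2\alpha_3\beta_3}V^{(3)\rho\sigma}_{\alpha_3\beta_3\alpha_4\beta_4}(-q,q+k,-k)P^{\alpha_4\beta_4\alpha_1\beta_1}, \]
integrated against $\chi_{\Gamma}(q+k)\,\Omega_0(dq)\,\Omega_0(dk)$. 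We need $\int\chi_\Gamma(q+k)F^{\rho\sigma\mu\nu}\,\Omega_0\Omega_0=\int\chi_\Gamma(q+k)F^{\mu\nu\rho\sigma}\,\Omega_0\Omega_0$.

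First, record the symmetries of $V^{(3)}$ from Duff's explicit formula. By the P$_6$ permutation sum, $V^{(3)}$ is invariant under any simultaneous permutation of the three momentum--index-pair triplets. By the ``sym'' operation, it is symmetric within each index pair. From the same formula, each term is a bilinear monomial in two of the momenta, either $p_i\cdot p_j$ or $p_{i\alpha}p_{j\beta}$, so $V^{(3)}(-p_1,-p_2,-p_3)=V^{(3)}(p_1,p_2,p_3)$. Also record that $P^{\alpha\beta\gamma\delta}$ is symmetric within pairs and under pair exchange $(\alpha\beta)\leftrightarrow(\gamma\delta)$.

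Next, rewrite $F^{\rho\sigma\mu\nu}(q,k)$ by cyclicity of the index trace. Its $\rho\sigma$ vertex is $V^{(3)\rho\sigma}(-q-k,q,k)$ and its $\mu\nu$ vertex is $V^{(3)\mu\nu}(-q,q+k,-k)$. Using the pair-exchange symmetry of $P$, relabel the summed pairs so that the $\mu\nu$ vertex comes first. Then use the momentum-slot permutation symmetry, swapping the two internal slots, together with the sign-flip invariance. This should give
\[ V^{(3)\mu\nu}(-q,q+k,-k)=V^{(3)\mu\nu}(q,-q-k,k) \]
with internal legs carrying momenta $(-q-k,k)$ or, after sign reversal, $(q+k,-k)$. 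This matches the $\mu\nu$ vertex of $F^{\mu\nu\rho\sigma}$ evaluated at a transformed pair $(q',k')$. The natural candidate is the exchange $q\leftrightarrow k$ combined with an overall sign flip of the whole expression. The sign flip is harmless since each vertex is even and $P$ is momentum-independent. Because $\Omega_0\times\Omega_0$ and $\chi_\Gamma(q+k)$ are invariant under $(q,k)\mapsto(k,q)$, the substitution is legitimate. I then need to check that the $\rho\sigma$ vertex transforms consistently under the same relabelling, which requires permuting its internal triplets as well.

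The main obstacle is that the leg momenta here are not all on the forward cone: the external momentum sits in a slot alongside internal legs with momenta such as $-q$ and $q+k$. So a simple swap $q\leftrightarrow k$ might not reproduce exactly the slot assignment of Eq.~\ref{eq:VP_measure_formula}. If the direct swap fails, I will fall back on the formal Feynman integral Eq.~\ref{eq:VP_Feynman_integral} before the formal order-swap. There I substitute $k\mapsto q-k$, which exchanges the two propagators $D(k)$ and $D(k-q)$ and permutes the vertex arguments. Combined with the slot permutation symmetry, evenness of $V^{(3)}$, and pair-exchange symmetry of $P$, this should exhibit the $(\mu\nu)\leftrightarrow(\rho\sigma)$ symmetry of the integrand at the level of the density. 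The subsequent steps leading to Eq.~\ref{eq:VP_measure_formula} are identical for both index orderings and preserve the identity, so the symmetry transfers to the measure. The careful bookkeeping of which index pair goes with which momentum in each slot is where I expect the work to lie.
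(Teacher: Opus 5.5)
Your ingredients are exactly the paper's: evenness of $V^{(3)}$, invariance under permutations of the momentum--index triplets, the pair symmetries of $P$, and reordering the scalar product (cyclicity of the trace). But the proposal never carries out the decisive step, and the one concrete claim you make about it rests on a slot mis-assignment.

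In $F^{\rho\sigma\mu\nu}(q,k)$ the $\mu\nu$ vertex is $V^{(3)\mu\nu}(-q,q+k,-k)$. The free pair $\mu\nu$ sits on the external leg, which carries $q+k$. The internal legs carry $-q$ and $-k$, and are contracted into the $q$-line and the $k$-line respectively. After the sign flip this vertex has external momentum $-q-k$ and internal momenta $q,k$ on the same two lines. That is precisely the $\mu\nu$ vertex of $F^{\mu\nu\rho\sigma}(q,k)$, up to relabelling the internal index pairs. The internal legs do not carry $(-q-k,k)$ as you write, so there is no mismatch to repair.

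The paper records
\[ V^{(3)\rho\sigma}_{\alpha_1\beta_1\alpha_2\beta_2}(-q,q+k,-k)=V^{(3)\rho\sigma}_{\alpha_1\beta_1\alpha_2\beta_2}(q,-q-k,k)=V^{(3)\rho\sigma}_{\alpha_2\beta_2\alpha_1\beta_1}(-q-k,q,k) \]
(parity, then a swap of two triplets). It reorders the product cyclically, applies this identity (and its inverse) to the two vertices, uses $P^{\alpha\beta\gamma\delta}=P^{\gamma\delta\alpha\beta}$, and reorders again. This gives $F^{\mu\nu\rho\sigma}(q,k)=F^{\rho\sigma\mu\nu}(q,k)$ for every $(q,k)$. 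Integrating against $\chi_\Gamma(q+k)\,\Omega_0(dq)\,\Omega_0(dk)$ then gives the lemma for every $\Gamma$, with no change of variables.

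Consequently the substitution you build the argument around does no work. The swap $(q,k)\mapsto(k,q)$ only interchanges the labels of the two internal lines. By the triplet symmetry and $P^{\alpha\beta\gamma\delta}=P^{\gamma\delta\alpha\beta}$ one already has $F^{\mu\nu\rho\sigma}(k,q)=F^{\mu\nu\rho\sigma}(q,k)$, so the swap cannot help exchange $(\mu\nu)$ with $(\rho\sigma)$. Your candidate identity $F^{\rho\sigma\mu\nu}(q,k)=F^{\mu\nu\rho\sigma}(k,q)$ is true, but only as the composite of the pointwise identity with this trivial relabelling. The forward-cone ``obstacle'' is an artefact of the misreading.

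The fallback has the same defect: $k\mapsto q-k$ in Eq.~\ref{eq:VP_Feynman_integral} again just swaps $D(k)$ and $D(k-q)$. It is also worse, because it is a substitution in a non-convergent integral. Its conclusion would have to be transported through the formal interchange of integration order, so it gives no rigorous statement about the measure. The symmetry should be proved, as in the paper, on the well-defined integrand of Eq.~\ref{eq:VP_measure_formula}, where it holds pointwise.
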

{\noindent{\bf Proof}}
From the second symmetry property of $V^{(3)}$ together with the obvious property
\begin{equation}\label{eq:V_3_parity}
V^{(3)}_{\alpha_1\beta_1\alpha_2\beta_2\alpha_3\beta_3}(-p_1,-p_2,-p_3)=V^{(3)}_{\alpha_1\beta_1\alpha_2\beta_2\alpha_3\beta_3}(p_1,p_2,p_3),
\end{equation}
we have that 
\[ V^{(3)\rho\sigma}_{\alpha_1\beta_1\alpha_2\beta_2}(-q,q+k,-k)=V^{(3)\rho\sigma}_{\alpha_1\beta_1\alpha_2\beta_2}(q,-q-k,k)=V^{(3)\rho\sigma}_{\alpha_2\beta_2\alpha_1\beta_1}(-q-k,q,k), \]
(where the contravariant components of $V^{(3)}$ in this equation occupy the last two tensor index positions). Therefore, using the cyclic property of the trace and the property $P^{\mu\nu\rho\sigma}=P^{\rho\sigma\mu\nu}$, we have
\begin{align*}
&V^{(3)\mu\nu}_{\alpha_1\beta_1\alpha_2\beta_2}(-q-k,q,k)P^{\alpha_2\beta_2\alpha_3\beta_3}V^{(3)\rho\sigma}_{\alpha_3\beta_3\alpha_4\beta_4}(-q,q+k,-k)P^{\alpha_4\beta_4\alpha_1\beta_1}\\
=&V^{(3)\rho\sigma}_{\alpha_1\beta_1\alpha_2\beta_2}(-q,q+k,-k)P^{\alpha_2\beta_2\alpha_3\beta_3}V^{(3)\mu\nu}_{\alpha_3\beta_3\alpha_4\beta_4}(-q-k,q,k)P^{\alpha_4\beta_4\alpha_1\beta_1}\\
=&V^{(3)\rho\sigma}_{\alpha_2\beta_2\alpha_1\beta_1}(-q-k,q,k)P^{\alpha_2\beta_2\alpha_3\beta_3}V^{(3)\mu\nu}_{\alpha_4\beta_4\alpha_3\beta_3}(-q,q+k,-k)P^{\alpha_4\beta_4\alpha_1\beta_1}\\
=&P^{\alpha_3\beta_3\alpha_2\beta_2}V^{(3)\rho\sigma}_{\alpha_2\beta_2\alpha_1\beta_1}(-q-k,q,k)P^{\alpha_1\beta_1\alpha_4\beta_4}V^{(3)\mu\nu}_{\alpha_4\beta_4\alpha_3\beta_3}(-q,q+k,-k).
\end{align*}
Thus, using the cyclic property of the trace again, we have that the r.h.s. of the last equation in the above sequence of equations is equal to
\[ V^{(3)\rho\sigma}_{\alpha_1\beta_1\alpha_2\beta_2}(-q-k,q,k)P^{\alpha_2\beta_2\alpha_3\beta_3}V^{(3)\mu\nu}_{\alpha_3\beta_3\alpha_4\beta_4}(-q,q+k,-k)P^{\alpha_4\beta_4\alpha_1\beta_1}, \]
and the required result follows easily. $\Box$

Therefore the graviton vacuum polarization tensor satisfies the symmetry conditions given by Eq.~\ref{eq:Pi_tensor_symmetry}.

\subsection{Existence of a density for $\Pi^{\mu\nu\rho\sigma}$}

From Eq.~\ref{eq:VP_measure_formula} $\Pi^{\mu\nu\rho\sigma}$ has the form
\begin{equation}\nonumber
\Pi^{\mu\nu\rho\sigma}(\Gamma)=\int\chi_{\Gamma}(q+k)f^{\mu\nu\rho\sigma}(q,k)|{\vct q}|^{-1}|{\vct k}|^{-1}\,d{\vct q}\,d{\vct k},
\end{equation}
for some continuous $f^{\mu\nu\rho\sigma}:({\bf R}^4)^2\rightarrow{\bf R}$. Thus
\begin{align*}
\Pi^{\mu\nu\rho\sigma}(\Gamma)=&\int\chi_{\Gamma}((|{\vct q}|+|{\vct k}|,{\vct q}+{\vct k}))f(q,k)|{\vct q}|^{-1}|{\vct k}|^{-1}\,d{\vct q}\,d{\vct k}\\
=&\int_{s_1=0}^{\infty}\int_{\theta_1=0}^{\pi}\int_{\phi_1=0}^{2\pi}\int_{s_2=0}^{\infty}\int_{\theta_2=0}^{\pi}\int_{\phi_2=0}^{2\pi}\chi_{\Gamma}((s_1+s_2,s_1\omega(\theta_1,\phi_1)+s_2\omega(\theta_2,\phi_2)))\\
&s_1s_2\sin(\theta_1)\sin(\theta_2)f^{\mu\nu\rho\sigma}(q,k)\,d\phi_2\,d\theta_2\,ds_2\,d\phi_1\,d\theta_1\,ds_1,
\end{align*}
where 
\[ \omega(\theta,\phi)=(\sin(\theta)\cos(\phi),\sin(\theta)\sin(\phi),\cos(\theta)). \]
Suppose that $\Gamma\in{\mathcal B}_0({\bf R}^4)$ is relatively compact. Then there exists $R>0$ such that $s_1>R$ or $s_2>R\Rightarrow(s_1+s_2,s_1\omega(\theta_1,\phi_1)+s_2\omega(\theta_2,\phi_2)){\slas{\in}}\Gamma$. Hence
\begin{align*}
\Pi^{\mu\nu\rho\sigma}(\Gamma)=&\int_{s_1=0}^{R}\int_{\theta_1=0}^{\pi}\int_{\phi_1=0}^{2\pi}\int_{s_2=0}^{R}\int_{\theta_2=0}^{\pi}\int_{\phi_2=0}^{2\pi}\chi_{\Gamma}((s_1+s_2,s_1\omega(\theta_1,\phi_1)+s_2\omega(\theta_2,\phi_2)))\\
&s_1s_2\sin(\theta_1)\sin(\theta_2)f^{\mu\nu\rho\sigma}(q,k)\,d\phi_2\,d\theta_2\,ds_2\,d\phi_1\,d\theta_1\,ds_1,
\end{align*}
Thus, since the domain of integration is compact and the integrand is continuous, Fubini's theorem is applicable and we can interchange the order of the integrations at will.
Therefore
\[ \Pi^{\mu\nu\rho\sigma}(\Gamma)=\int_{\theta_2=0}^{\pi}\int_{\phi_2=0}^{2\pi}I^{\mu\nu\rho\sigma}_{(\theta_2,\phi_2)}(\Gamma)\sin(\theta_2)\,d\phi_2\,d\theta_2, \]
where
\begin{align*}
I_{(\theta_2,\phi_2)}^{\mu\nu\rho\sigma}(\Gamma)=&\int_{s_1=0}^{R}\int_{s_2=0}^{R}\int_{\theta_1=0}^{\pi}\int_{\phi_1=0}^{2\pi}\chi_{\Gamma}((s_1+s_2,s_1\omega(\theta_1,\phi_1)+s_2\omega(\theta_2,\phi_2)))\\
&s_1s_2\sin(\theta_1)f^{\mu\nu\rho\sigma}(q,k)\,d\phi_1\,d\theta_1\,ds_1\,ds_2,
\end{align*}
Make a change of variables transformation given by
\[ p=p_{(\theta_2,\phi_2)}(s_1,s_2,\theta_1,\phi_1)=h_{(\theta_2,\phi_2)}(s_1,s_2,\theta_1,\phi_1)=(s_1+s_2,s_1\omega(\theta_1,\phi_1)+s_2\omega(\theta_2,\phi_2)). \]
The Jacobian of the transformation for any $(\theta_2,\phi_2)\in[0,\pi]\times[0,2\pi]$ is 
\begin{align*}
&J_{(\theta_2,\phi_2)}(s_1,s_2,\theta_1,\phi_1)=\\
&\left|\text{det}\left(\begin{array}{cccc}
1&1&0&0\\
\sin(\theta_1)\cos(\phi_1)&\sin(\theta_2)\cos(\phi_2)&s_1\cos(\theta_1)\cos(\phi_1)&-s_1\sin(\theta_1)\sin(\phi_1)\\
\sin(\theta_1)\sin(\phi_1)&\sin(\theta_2)\sin(\phi_2)&s_1\cos(\theta_1)\sin(\phi_1)&s_1\sin(\theta_1)\cos(\phi_1)\\
\cos(\theta_1)&\cos(\theta_2)&-s_1\sin(\theta_1)&0
\end{array}\right)\right|.
\end{align*}
The determinant is an analytic function of $(s_1,s_2,\theta_1,\phi_1)$ and it vanishes on a submanifold of $(0,\infty)^2\times[0,\pi]\times[0,2\pi]$ of co-dimension 1 (or more) which, therefore, has 4-Lebesgue measure zero. (The inverse image of 0 by det is a closed submanifold of $(0,\infty)^2\times[0,\pi]\times[0,2\pi]$ which, by analyticity, must have empty interior. Thus, by the Invariance of Domain theorem it must have topological dimension $<4$). Therefore
\begin{align*}
I_{(\theta_2,\phi_2)}^{\mu\nu\rho\sigma}(\Gamma)=&\int\chi_{\Gamma}(p)s_1s_2\sin(\theta_1)f^{\mu\nu\rho\sigma}(q,k)(J_{(\theta_2,\phi_2)}(h_{(\theta_2,\phi_2)}^{-1}(p)))^{-1}\,dp,
\end{align*}
Hence ${\mathcal L}^4(\Gamma)=0\Rightarrow I_{(\theta_2,\phi_2)}(\Gamma)=0,\forall(\theta_2,\phi_2)\in[0,\pi]\times[0,2\pi]$ where ${\mathcal L}^4$ denotes the Lebesgue measure on ${\bf R}^4$.
Hence ${\mathcal L}^4(\Gamma)=0\Rightarrow\Pi^{\mu\nu\rho\sigma}(\Gamma)=0$. Therefore by the Radon-Nikodym theorem $\Pi^{\mu\nu\rho\sigma}$ is absolutely continuous with respect to ${\mathcal L}^4$ i.e. there exists a density $\Pi^{\mu\nu\rho\sigma}:{\bf R}^4\rightarrow{\bf C}$ such that
\[ \Pi^{\mu\nu\rho\sigma}(\Gamma)=\int_{\Gamma}\Pi^{\mu\nu\rho\sigma}(q)\,dq,\forall\Gamma\in{\mathcal B}_0({\bf R}^4). \]

If the spectral functions are continuous then the density $\Pi$ is continuous and if they are analytic then the density is analytic.

\subsection{Computation of the spectral functions for $\Pi^{\mu\nu\rho\sigma}$ using the spectral calculus}

We now compute the functions $g^{\mu\nu\rho\sigma\prime}_a$ associated with the measure $\Pi^{\mu\nu\rho\sigma}$ so that we can use the results of Section~\ref{section:spectral_calculus_summary} to compute the spectral functions $\sigma_i$ and hence the density $\Pi^{\mu\nu\rho\sigma}$ associated with the measure.
Using Eq.~\ref{eq:VP_measure_formula} we make the following rigorous computation
\begin{align*}
g^{\mu\nu\rho\sigma}(a,b,\epsilon)=&\Pi^{\mu\nu\rho\sigma}(\Gamma(a,b,\epsilon))\\
=&-\frac{\kappa^2}{16\pi^2}\int\chi_{\Gamma(a,b,\epsilon)}(q+k)V^{(3)\mu\nu}_{\alpha_1\beta_1\alpha_2\beta_2}(-q-k,q,k)P^{\alpha_2\beta_2\alpha_3\beta_3}\\
&V^{(3)\rho\sigma}_{\alpha_3\beta_3\alpha_4\beta_4}(-q,q+k,-k)P^{\alpha_4\beta_4\alpha_1\beta_1}|{\vct q}|^{-1}|{\vct k}|^{-1}\,d{\vct q}\,d{\vct k}\\
\approx&-\frac{\kappa^2}{16\pi^2}\int\chi_{(a,b)}(|{\vct q}|+|{\vct k}|)\chi_{B_{\epsilon}({\vct0})}({\vct q}+{\vct k})V^{(3)\mu\nu}_{\alpha_1\beta_1\alpha_2\beta_2}(-q-k,q,k)P^{\alpha_2\beta_2\alpha_3\beta_3}\\
&V^{(3)\rho\sigma}_{\alpha_3\beta_3\alpha_4\beta_4}(-q,q+k,-k)P^{\alpha_4\beta_4\alpha_1\beta_1}|{\vct q}|^{-1}|{\vct k}|^{-1}\,d{\vct q}\,d{\vct k}\\
=&-\frac{\kappa^2}{16\pi^2}\int\chi_{(a,b)}(|{\vct q}|+|{\vct k}|)\chi_{B_{\epsilon}({\vct0})-{\vct k}}({\vct q})V^{(3)\mu\nu}_{\alpha_1\beta_1\alpha_2\beta_2}(-q-k,q,k)P^{\alpha_2\beta_2\alpha_3\beta_3}\\
&V^{(3)\rho\sigma}_{\alpha_3\beta_3\alpha_4\beta_4}(-q,q+k,-k)P^{\alpha_4\beta_4\alpha_1\beta_1}|{\vct q}|^{-1}|{\vct k}|^{-1}\,d{\vct q}\,d{\vct k}\\
\approx&-\frac{\kappa^2}{16\pi^2}(\frac{4}{3}\pi\epsilon^3)\int\chi_{(a,b)}(2|{\vct k}|)V^{(3)\mu\nu}_{\alpha_1\beta_1\alpha_2\beta_2}(-q-k,q,k)P^{\alpha_2\beta_2\alpha_3\beta_3}\\
&V^{(3)\rho\sigma}_{\alpha_3\beta_3\alpha_4\beta_4}(-q,q+k,-k)P^{\alpha_4\beta_4\alpha_1\beta_1}|{\vct k}|^{-2}\,d{\vct k},
\end{align*}
where $q=q({\vct k})=(|{\vct k}|,-{\vct k}),k=k({\vct k})=(|{\vct k}|,{\vct k})$. Now
\[ \chi_{(a,b)}(2|{\vct k}|)=1\Leftrightarrow a<2|{\vct k}|<b\Leftrightarrow\frac{a}{2}<|{\vct k}|<\frac{b}{2}. \]
Therefore
\begin{align*}
g^{\mu\nu\rho\sigma}(a,b,\epsilon)\approx&-\frac{\kappa^2}{16\pi^2}(\frac{4}{3}\pi\epsilon^3)\int_{s=\frac{a}{2}}^\frac{b}{2}\int_{\theta=0}^{\pi}\int_{\phi=0}^{2\pi}V^{(3)\mu\nu}_{\alpha_1\beta_1\alpha_2\beta_2}(-q-k,q,k)P^{\alpha_2\beta_2\alpha_3\beta_3}\\
&V^{(3)\rho\sigma}_{\alpha_3\beta_3\alpha_4\beta_4}(-q,q+k,-k)P^{\alpha_4\beta_4\alpha_1\beta_1}s^{-2}s^2\sin(\theta)\,d\phi\,d\theta\,ds\\
=&-\frac{\kappa^2}{12\pi}\epsilon^3\int_{s=\frac{a}{2}}^\frac{b}{2}\int_{\theta=0}^{\pi}\int_{\phi=0}^{2\pi}V^{(3)\mu\nu}_{\alpha_1\beta_1\alpha_2\beta_2}(-q-k,q,k)P^{\alpha_2\beta_2\alpha_3\beta_3}\\
&V^{(3)\rho\sigma}_{\alpha_3\beta_3\alpha_4\beta_4}(-q,q+k,-k)P^{\alpha_4\beta_4\alpha_1\beta_1}\sin(\theta)\,d\phi\,d\theta\,ds,
\end{align*}
where $q=q({\vct k}),k=k({\vct k})$ and ${\vct k}={\vct k}(s,\theta,\phi)=s(\sin(\theta)\cos(\phi),\sin(\theta)\sin(\phi),\cos(\theta))$. Hence
\begin{align*}
g^{\mu\nu\rho\sigma}_a(b)=&\lim_{\epsilon\rightarrow0}(\epsilon^{-3}g(a,b,\epsilon))\\
=&-\frac{\kappa^2}{12\pi}\int_{s=\frac{a}{2}}^\frac{b}{2}\int_{\theta=0}^{\pi}\int_{\phi=0}^{2\pi}V^{(3)\mu\nu}_{\alpha_1\beta_1\alpha_2\beta_2}(-q-k,q,k)P^{\alpha_2\beta_2\alpha_3\beta_3}\\
&V^{(3)\rho\sigma}_{\alpha_3\beta_3\alpha_4\beta_4}(-q,q+k,-k)P^{\alpha_4\beta_4\alpha_1\beta_1}\sin(\theta)\,d\phi\,d\theta\,ds.
\end{align*}
Therefore, using the Leibniz integral rule, we have
\begin{align}\label{eq:g_prime_functions}
g^{\mu\nu\rho\sigma\prime}_a(b)=&-\frac{\kappa^2}{24\pi}\int_{\theta=0}^{\pi}\int_{\phi=0}^{2\pi}V^{(3)\mu\nu}_{\alpha_1\beta_1\alpha_2\beta_2}(-q-k,q,k)P^{\alpha_2\beta_2\alpha_3\beta_3}\nonumber\\
&V^{(3)\rho\sigma}_{\alpha_3\beta_3\alpha_4\beta_4}(-q,q+k,-k)P^{\alpha_4\beta_4\alpha_1\beta_1}\sin(\theta)\,d\phi\,d\theta,
\end{align}
where $q=q({\vct k}),k=k({\vct k})$ and ${\vct k}={\vct k}(b,\theta,\phi)=b(\sin(\theta)\cos(\phi),\sin(\theta)\sin(\phi),\cos(\theta))$.
The values $g^{\mu\nu\rho\sigma\prime}_a(b)$ are well defined and finite for all $b>0$ since the 3-graviton vertex functions $V^{(3)}$ are continuous and $[0,\pi]\times[0,2\pi]$ is compact.

Since the graviton vertex function $V^{(3)}$ is polynomial it is analytic. Also $[0,\pi]\times[0,2\pi]$ is compact. Therefore, by the well known theorem in complex analysis concerning the analyticity of parameter-dependent integrals (differentiation under the integral sign), the $g_a^{\mu\nu\rho\sigma\prime}$ functions are analytic. 

Therefore using the spectral calculus algorithm described in Section~\ref{section:spectral_calculus_summary} the spectral functions $\sigma_1,\ldots,\sigma_{10}$ can be computed and they are analytic.

\subsection{The graviton vacuum polarization function $\Pi^{\mu\nu\rho\sigma}$}

The density associated with the analytic spectral functions as defined by Eq.~\ref{eq:canonical_form_density} is analytic.
It is a well defined and finite analytic complex tensor valued function. 

We have computed it from a causal measure associated with the vacuum polarization Feynman integral. Therefore the density that we have computed is defined in the timelike region of Minkowski space. We would like to use it to analyze graviton scattering processes for which the momentum transfer is spacelike. Therefore we would like to analytically continue the vacuum polarization function that we have computed to the spacelike region of Minkowski space.

This is straightforward because the graviton 3-vertex is a polynomial function so it is defined and analytic over all of ${\bf R}^4$ and the $P^{\mu\nu\rho\sigma}$ are just numbers. Thus, by Eq.~\ref{eq:g_prime_functions}, the functions $g_a^{\mu\nu\rho\sigma\prime}(Q)$ are defined and analytic for all $Q\in{\bf R}$. Following Ref.~\cite{NPB} we define $\zeta(q)=\text{sign}(q^2)|q^2|^{\frac{1}{2}}$ ($\zeta$ generates a one-to-one mapping between ${\bf R}\backslash\{0\}$ and the non-null orbits of $O(1,3)^{\uparrow}$ in Minkowski space) and extend $\Pi^{\mu\nu\rho\sigma}$ to all of Minkowski space using the formula
\begin{align*}
\Pi^{\mu\nu\rho\sigma}(q)=&Q(\eta^{\mu\nu}\eta^{\rho\sigma}\sigma_1(Q)+\eta^{\mu\rho}\eta^{\nu\sigma}\sigma_2(Q)+\eta^{\mu\sigma}\eta^{\nu\rho}\sigma_3(Q))+\\
&Q^{-1}(\eta^{\mu\nu}q^{\rho}q^{\sigma}\sigma_4(Q)+\eta^{\mu\rho}q^{\nu}q^{\sigma}\sigma_5(Q)+\eta^{\mu\sigma}q^{\nu}q^{\rho}\sigma_6(Q)+\\
&\eta^{\nu\rho}q^{\mu}q^{\sigma}\sigma_7(Q)+\eta^{\nu\sigma}q^{\mu}q^{\rho}\sigma_8(Q)+\eta^{\rho\sigma}q^{\mu}q^{\nu}\sigma_9(Q)+\\
&q^{\mu}q^{\nu}q^{\rho}q^{\sigma}\sigma_{10}(Q)),
\end{align*}
where $Q=\zeta(q)$. In particular, when $q$ is spacelike, i.e. $q^2<0$, then $\zeta(q)=-(-q^2)^{\frac{1}{2}}$. Thus $\Pi^{\mu\nu\rho\sigma}$ is defined for spacelike $q$ by
\begin{align}\label{eq:spacelike_graviton_VP}
\Pi^{\mu\nu\rho\sigma}(q)=&-Q(\eta^{\mu\nu}\eta^{\rho\sigma}\sigma_1(-Q)+\eta^{\mu\rho}\eta^{\nu\sigma}\sigma_2(-Q)+\eta^{\mu\sigma}\eta^{\nu\rho}\sigma_3(-Q))-\nonumber\\
&Q^{-1}(\eta^{\mu\nu}q^{\rho}q^{\sigma}\sigma_4(-Q)+\eta^{\mu\rho}q^{\nu}q^{\sigma}\sigma_5(-Q)+\eta^{\mu\sigma}q^{\nu}q^{\rho}\sigma_6(-Q)+\nonumber\\
&\eta^{\nu\rho}q^{\mu}q^{\sigma}\sigma_7(-Q)+\eta^{\nu\sigma}q^{\mu}q^{\rho}\sigma_8(-Q)+\eta^{\rho\sigma}q^{\mu}q^{\nu}\sigma_9(-Q)+\nonumber\\
&q^{\mu}q^{\nu}q^{\rho}q^{\sigma}\sigma_{10}(-Q)),
\end{align}
where $Q=(-q^2)^{\frac{1}{2}}$.

This function can be used in QG calculations involving the vacuum polarization tensor for scattering processes in the t-channel without the need for counterterms.

\section{Graviton-graviton scattering in pure gravity QG}

The following gives some examples of multi-loop QG calculations using the graviton vacuum polarization (VP) tensor. These calculations involve a number of insertions of a VP bubble at various points along a graviton propagator.

\subsection{Tree level}

A Feynman diagram associated with tree level graviton-graviton scattering is shown in Fig. \ref{fig:tree-level}. The Feynman amplitude for this process is is given by
\begin{align*}
i{\mathcal M}_{\alpha_1^{\prime}\alpha_2^{\prime}\alpha_1\alpha_2}(p_1^{\prime},p_2^{\prime},p_1,p_2)=&\epsilon^{\mu_1^{\prime}\nu_1^{\prime}}(p_1^{\prime},\alpha_1^{\prime})^{*}i\kappa V^{(3)\rho\sigma}_{\mu_1^{\prime}\nu_1^{\prime}\mu_1\nu_1}(p_1^{\prime},q,-p_1)\epsilon^{\mu_1\nu_1}(p_1,\alpha_1)\\
&iD_{\rho\sigma\gamma\delta}(q)\\
&\epsilon^{\mu_2^{\prime}\nu_2^{\prime}}(p_2^{\prime},\alpha_2^{\prime})^{*}i\kappa V^{(3)\gamma\delta}_{\mu_2^{\prime}\nu_2^{\prime}\mu_2\nu_2}(-q,p_2^{\prime},-p_2)\epsilon^{\mu_2\nu_2}(p_2,\alpha_2), 
\end{align*}
where $q=p_1-p_1^{\prime}$ is the momentum transfer, $\{\epsilon^{\mu\nu}(p,\alpha):\alpha\in\{1,2\},p\in{\bf R}^4,\mu,\nu\in\{0,1,2,3\}\}$ are polarization tensors which satisfy \cite{Misner}
\begin{align*}
&p_{\mu}\epsilon^{\mu\nu}(p,\alpha)=0\text{ (transversality)},\\
&\eta_{\mu\nu}\epsilon^{\mu\nu}(p,\alpha)=0\text{ (trace free property)},\\
&\epsilon^{\mu\nu}(p,\alpha)=\epsilon^{\nu\mu}(p,\alpha)\text{ (symmetry)},
\end{align*} 
and $\kappa=\sqrt{32\pi G}$ is the gravitational coupling constant in which $G=G_N$ is Newton's universal constant of gravitation.

\begin{figure} 
\centering
\includegraphics[width=8cm]{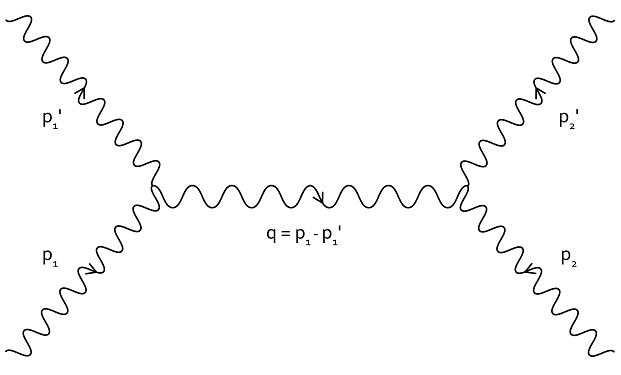}
\caption{A Feynman diagram for tree-level graviton-graviton scattering\label{fig:tree-level}}
\end{figure}

Therefore
\begin{align*}
{\mathcal M}_{\alpha_1^{\prime}\alpha_2^{\prime}\alpha_1\alpha_2}(p_1^{\prime},p_2^{\prime},p_1,p_2)=&\frac{\kappa^2}{Q^2}\epsilon^{\mu_1^{\prime}\nu_1^{\prime}}(p_1^{\prime},\alpha_1^{\prime})^{*}V^{(3)\rho\sigma}_{\mu_1^{\prime}\nu_1^{\prime}\mu_1\nu_1}(p_1^{\prime},q,-p_1)\epsilon^{\mu_1\nu_1}(p_1,\alpha_1)\\
&P_{\rho\sigma\gamma\delta}(q)\\
&\epsilon^{\mu_2^{\prime}\nu_2^{\prime}}(p_2^{\prime},\alpha_2^{\prime})^{*}V^{(3)\gamma\delta}_{\mu_2^{\prime}\nu_2^{\prime}\mu_2\nu_2}(-q,p_2^{\prime},-p_2)\epsilon^{\mu_2\nu_2}(p_2,\alpha_2), 
\end{align*}
where $Q=(-q^2)^{\frac{1}{2}}$.

\subsection{One-loop example}

The one-loop Feynman amplitude associated with the Feynman diagram of Fig. \ref{fig:one-loop} is
\begin{align*}
i{\mathcal M}_{\alpha_1^{\prime}\alpha_2^{\prime}\alpha_1\alpha_2}(p_1^{\prime},p_2^{\prime},p_1,p_2)=&\epsilon^{\mu_1^{\prime}\nu_1^{\prime}}(p_1^{\prime},\alpha_1^{\prime})^{*}i\kappa V^{(3)\rho\sigma}_{\mu_1^{\prime}\nu_1^{\prime}\mu_1\nu_1}(p_1^{\prime},q,-p_1)\epsilon^{\mu_1\nu_1}(p_1,\alpha_1)\\
&iD_{\rho\sigma\rho_1\sigma_1}(q)i\Pi^{\rho_1\sigma_1\gamma_1\delta_1}(q)iD_{\gamma_1\delta_1\gamma\delta}(q)\\
&\epsilon^{\mu_2^{\prime}\nu_2^{\prime}}(p_2^{\prime},\alpha_2^{\prime})^{*}i\kappa V^{(3)\gamma\delta}_{\mu_2^{\prime}\nu_2^{\prime}\mu_2\nu_2}(-q,p_2^{\prime},-p_2)\epsilon^{\mu_2\nu_2}(p_2,\alpha_2).   
\end{align*}

\begin{figure} 
\centering
\includegraphics[width=8cm]{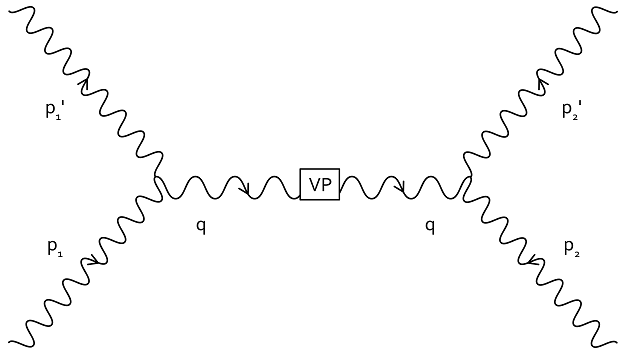}
\caption{A Feynman diagram for one-loop graviton-graviton scattering\label{fig:one-loop}}
\end{figure}

Therefore
\begin{align*}
{\mathcal M}_{\alpha_1^{\prime}\alpha_2^{\prime}\alpha_1\alpha_2}(p_1^{\prime},p_2^{\prime},p_1,p_2)=&\frac{\kappa^2}{Q^4}\epsilon^{\mu_1^{\prime}\nu_1^{\prime}}(p_1^{\prime},\alpha_1^{\prime})^{*}V^{(3)\rho\sigma}_{\mu_1^{\prime}\nu_1^{\prime}\mu_1\nu_1}(p_1^{\prime},q,-p_1)\epsilon^{\mu_1\nu_1}(p_1,\alpha_1)\\
&P_{\rho\sigma\rho_1\sigma_1}\Pi^{\rho_1\sigma_1\gamma_1\delta_1}(q)P_{\gamma_1\delta_1\gamma\delta}\\
&\epsilon^{\mu_2^{\prime}\nu_2^{\prime}}(p_2^{\prime},\alpha_2^{\prime})^{*}V^{(3)\gamma\delta}_{\mu_2^{\prime}\nu_2^{\prime}\mu_2\nu_2}(-q,p_2^{\prime},-p_2)\epsilon^{\mu_2\nu_2}(p_2,\alpha_2).
\end{align*}
It is straightforward to show that the polarization tensors can be chosen to be analytic functions of their momentum argument. As mentioned above the 3-graviton vertex and the VP function are analytic. Therefore the Feynman amplitude is an analytic function of its momentum arguments. 

It is common, e.g. see Refs.~\cite{Feynman_ghosts,tHooft} in Yang-Mills and QG theory to include in the analysis ghost particles which are fictitious particles circulating in loops which are supposedly required to ensure unitary of the S matrix associated with the Feynman amplitude ${\mathcal M}$.

If one wants to include such fictitious particles then the required computations can easily be carried out using CSR. However we have found that when using CSR loops respect unitarity and ghost particles are not necessary. This result will be presented elsewhere. 

\subsection{Two-loop example}

The two-loop Feynman amplitude associated with the Feynman diagram of Fig. \ref{fig:two-loop} is
\begin{align*}
i{\mathcal M}_{\alpha_1^{\prime}\alpha_2^{\prime}\alpha_1\alpha_2}(p_1^{\prime},p_2^{\prime},p_1,p_2)=&\epsilon^{\mu_1^{\prime}\nu_1^{\prime}}(p_1^{\prime},\alpha_1^{\prime})^{*}i\kappa V^{(3)\rho\sigma}_{\mu_1^{\prime}\nu_1^{\prime}\mu_1\nu_1}(p_1^{\prime},q,-p_1)\epsilon^{\mu_1\nu_1}(p_1,\alpha_1)\\
&iD_{\rho\sigma\rho_1\sigma_1}(q)i\Pi^{\rho_1\sigma_1\rho_2\sigma_2}(q)iD_{\rho_2\sigma_2\rho_3\sigma_3}(q)i\Pi^{\rho_3\sigma_3\rho_4\sigma_4}(q)iD_{\rho_4\sigma_4\gamma\delta}(q)\\
&\epsilon^{\mu_2^{\prime}\nu_2^{\prime}}(p_2^{\prime},\alpha_2^{\prime})^{*}i\kappa V^{(3)\gamma\delta}_{\mu_2^{\prime}\nu_2^{\prime}\mu_2\nu_2}(-q,p_2^{\prime},-p_2)\epsilon^{\mu_2\nu_2}(p_2,\alpha_2).
\end{align*}

\begin{figure} 
\centering
\includegraphics[width=8cm]{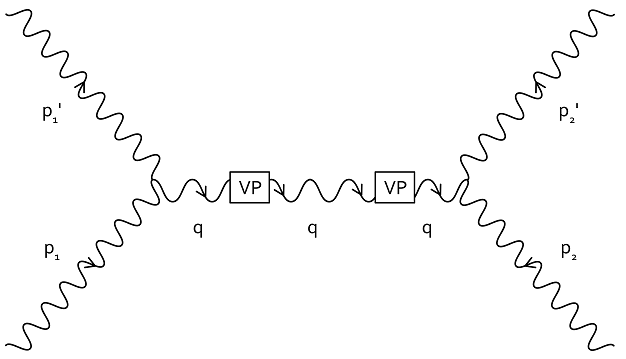}
\caption{A Feynman diagram for two-loop graviton-graviton scattering\label{fig:two-loop}}
\end{figure}

Therefore
\begin{align*}
{\mathcal M}_{\alpha_1^{\prime}\alpha_2^{\prime}\alpha_1\alpha_2}(p_1^{\prime},p_2^{\prime},p_1,p_2)=&\frac{\kappa^2}{Q^6}\epsilon^{\mu_1^{\prime}\nu_1^{\prime}}(p_1^{\prime},\alpha_1^{\prime})^{*}V^{(3)\rho\sigma}_{\mu_1^{\prime}\nu_1^{\prime}\mu_1\nu_1}(p_1^{\prime},q,-p_1)\epsilon^{\mu_1\nu_1}(p_1,\alpha_1)\\
&P_{\rho\sigma\rho_1\sigma_1}\Pi^{\rho_1\sigma_1\rho_2\sigma_2}(q)P_{\rho_2\sigma_2\rho_3\sigma_3}\Pi^{\rho_3\sigma_3\rho_4\sigma_4}(q)P_{\rho_4\sigma_4\gamma\delta}\\
&\epsilon^{\mu_2^{\prime}\nu_2^{\prime}}(p_2^{\prime},\alpha_2^{\prime})^{*}V^{(3)\gamma\delta}_{\mu_2^{\prime}\nu_2^{\prime}\mu_2\nu_2}(-q,p_2^{\prime},-p_2)\epsilon^{\mu_2\nu_2}(p_2,\alpha_2).
\end{align*}

\subsection{$n$-loop example}

The $n$-loop Feynman amplitude associated with the Feynman diagram of Fig. \ref{fig:n-loop} is
\begin{align*}
i{\mathcal M}_{\alpha_1^{\prime}\alpha_2^{\prime}\alpha_1\alpha_2}(p_1^{\prime},p_2^{\prime},p_1,p_2)=&\epsilon^{\mu_1^{\prime}\nu_1^{\prime}}(p_1^{\prime},\alpha_1^{\prime})^{*}i\kappa V^{(3)\rho_0\sigma_0}_{\mu_1^{\prime}\nu_1^{\prime}\mu_1\nu_1}(p_1^{\prime},q,-p_1)\epsilon^{\mu_1\nu_1}(p_1,\alpha_1)\\
&\prod_{k=0}^{n-1}(iD_{\rho_{2k}\sigma_{2k}\rho_{2k+1}\sigma_{2k+1}}(q)i\Pi^{\rho_{2k+1}\sigma_{2k+1}\rho_{2k+2}\sigma_{2k+2}}(q))iD_{\rho_{2n}\sigma_{2n}\gamma\delta}(q)\\
&\epsilon^{\mu_2^{\prime}\nu_2^{\prime}}(p_2^{\prime},\alpha_2^{\prime})^{*}i\kappa V^{(3)\gamma\delta}_{\mu_2^{\prime}\nu_2^{\prime}\mu_2\nu_2}(-q,p_2^{\prime},-p_2)\epsilon^{\mu_2\nu_2}(p_2,\alpha_2).
\end{align*}

\begin{figure} 
\centering
\includegraphics[width=8cm]{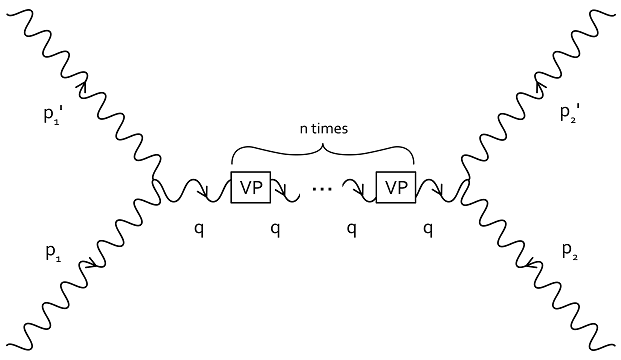}
\caption{A Feynman diagram for $n$-loop graviton-graviton scattering\label{fig:n-loop}}
\end{figure}

Therefore
\begin{align*}
{\mathcal M}_{\alpha_1^{\prime}\alpha_2^{\prime}\alpha_1\alpha_2}(p_1^{\prime},p_2^{\prime},p_1,p_2)=&\frac{\kappa^2}{Q^{2(n+1)}}\epsilon^{\mu_1^{\prime}\nu_1^{\prime}}(p_1^{\prime},\alpha_1^{\prime})^{*}V^{(3)\rho_0\sigma_0}_{\mu_1^{\prime}\nu_1^{\prime}\mu_1\nu_1}(p_1^{\prime},q,-p_1)\epsilon^{\mu_1\nu_1}(p_1,\alpha_1)\\
&\prod_{k=0}^{n-1}(P_{\rho_{2k}\sigma_{2k}\rho_{2k+1}\sigma_{2k+1}}\Pi^{\rho_{2k+1}\sigma_{2k+1}\rho_{2k+2}\sigma_{2k+2}}(q))P_{\rho_{2n}\sigma_{2n}\gamma\delta}\\
&\epsilon^{\mu_2^{\prime}\nu_2^{\prime}}(p_2^{\prime},\alpha_2^{\prime})^{*}V^{(3)\gamma\delta}_{\mu_2^{\prime}\nu_2^{\prime}\mu_2\nu_2}(-q,p_2^{\prime},-p_2)\epsilon^{\mu_2\nu_2}(p_2,\alpha_2).
\end{align*}
$V^{(3)\rho\sigma}_{\mu^{\prime}\nu^{\prime}\mu\nu}$ is an analytic function of its momentum arguments, the polarization functions can be chosen to be analytic functions of their momentum arguments, the $P_{\rho\sigma\gamma\delta}$ are just numbers and the $\Pi^{\mu\nu\rho\sigma}$ are analytic functions of (the projection onto ${\bf R}^3$ of) their momentum arguments. Therefore the Feynman amplitude ${\mathcal M}$ is a well defined, finite, analytic function of its momentum arguments on the set
\begin{align*}
S=&\{(p_1^{\prime},p_2^{\prime},p_1,p_2)\in({\bf R}^4)^4:p_1^{\prime}+p_2^{\prime}=p_1+p_2,p_1^{\prime2}=p_2^{\prime2}=p_1^2=p_2^2=0,(p_1-p_1^{\prime})^2\neq0\}\\
=&\{(p_1^{\prime},p_2^{\prime},p_1,p_2)\in({\bf R}^4)^4:p_1^{\prime}+p_2^{\prime}=p_1+p_2,p_1^{\prime2}=p_2^{\prime2}=p_1^2=p_2^2=0,{\vct p}_1^{\prime}\neq{\vct p}_1\}.
\end{align*}
It is straightforward to compute the Feynman amplitude associated with multiple nested loop Feynman diagrams using CSR. In order to keep this paper to a reasonable length we will present such a calculation elsewhere.

\section{Conclusion}

Locally conformally flat spacetimes (M\"{o}bius structures) represent the most elementary family of topological structures for which the notion of Lorentz covariance is applicable.

The  subject of radiative corrections in QG is closely tied up with the subject of Lorentz covariant complex tensor valued functions on Minkowski space which is itself closely related to the subject of Lorentz covariant complex tensor valued measures on Minkowski space.

This paper has investigated the general theory of such measures in the case of four Lorentz indices. It has been shown how CSR can be used to render recalcitrant Feynman integrals as being Lorentz covariant Borel complex tensor valued measures and how spectral calculus can be used to compute the spectra of such measures and hence the densities which generate them. These densities are well defined and finite and can be used in place of the function associated with the Feynman integral in QFT and QG computations. 

The process of CSR does not involve the use of renormalization or require counterterms to cancel infinities since no infinities, or divergences, of either a UV or an IR nature arise when carrying out CSR.

These techniques have been used to compute the graviton vacuum polarization tensor which has then been used to compute the Feynman amplitude for multiloop graviton-graviton scattering processes in pure gravity QG.

In subsequent work we will investigate the incorporation of SM matter through QED \cite{IJMPA,NPB}, electroweak or QCD \cite{IJMPA_2025} into QG. This will involve equipping Feynman amplitude kernels ${\mathcal M}$ with additional structure in the form of additional indices and transformation properties. It is expected that, using CSR, a finite theory, as in pure gravity QG, will result without the impediment of an uncontrollable proliferation of divergences and counterterms.

\end{document}